\documentclass{article}
\usepackage{amsmath,amssymb,amsthm}
\usepackage[latin1]{inputenc}
\usepackage{times}
\usepackage{mathrsfs}
\usepackage{todonotes}
\usepackage{a4wide}
\usepackage{times}
\usepackage{authblk}

\newtheorem{notations}{Notations}

\newtheorem{lem}{Lemma}

\newtheorem{theo}{Theorem}
\newtheorem{coro}{Corollary}
\newtheorem{prop}{Proposition}

\DeclareMathOperator{\K}{\mathbb{K}}
\DeclareMathOperator{\Q}{\mathbb{Q}}

\DeclareMathOperator{\C}{\mathbb{C}}
\DeclareMathOperator{\sfI}{\mathbf{I}}
\DeclareMathOperator{\jac}{jac}
\DeclareMathOperator{\F}{\mathbf{F}}
\DeclareMathOperator{\Fgenh}{\mathfrak F^{h}}
\DeclareMathOperator{\Fgena}{\mathfrak F}

\DeclareMathOperator{\frakg}{\mathfrak{g}}
\DeclareMathOperator{\frakI}{\mathfrak{I}}
\DeclareMathOperator{\frakf}{\mathfrak{f}}
\DeclareMathOperator{\homg}{\mathsf{hom}}
\DeclareMathOperator{\HS}{\mathsf{HS}}
\DeclareMathOperator{\wHS}{\mathsf{wHS}}
\DeclareMathOperator{\LM}{\mathsf{LM}}
\DeclareMathOperator{\MaxM}{\mathsf{MaxMinors}}
\DeclareMathOperator{\mon}{\mathsf{Monomials}}
\DeclareMathOperator{\dreg}{d_{reg}}
\DeclareMathOperator{\wdeg}{wdeg}

\DeclareMathOperator{\crit}{\mathsf{crit}}
\DeclareMathOperator{\DEG}{\mathsf{DEG}}
\begin{document}
\title{Critical Points and Gr\"obner Bases:\\ the Unmixed Case}
\author{
Jean-Charles Faug\`ere, Mohab Safey El Din, Pierre-Jean Spaenlehauer}
\affil{INRIA, Paris-Rocquencourt Center, PolSys Project\\
       UPMC, Univ Paris 06, LIP6\\
       CNRS, UMR 7606, LIP6\\
       UFR Ing\'enierie 919, LIP6 Passy-Kennedy\\
       Case 169, 4, Place Jussieu, F-75252 Paris\\
}
\date{}
\maketitle
\begin{abstract}
  We consider the problem of computing critical points of the
  restriction of a polynomial map to an algebraic variety. This is of
  first importance since the global minimum of such a map is reached
  at a critical point.  Thus, these points appear naturally in
  non-convex polynomial optimization which occurs in a wide range of
  scientific applications (control theory, chemistry, economics,...).

  Critical points also play a central role in recent algorithms of
  effective real algebraic geometry.  Experimentally, it has been
  observed that Gr\"obner basis algorithms are efficient to compute
  such points. Therefore, recent software based on the so-called
  Critical Point Method are built on Gr\"obner bases engines.

  Let $f_1, \ldots, f_p$ be polynomials in $ \Q[x_1, \ldots, x_n]$ of
  degree $D$, $V\subset\C^n$ be their complex variety and $\pi_1$ be
  the projection map $(x_1,\ldots, x_n)\mapsto x_1$. The critical
  points of the restriction of $\pi_1$ to $V$ are defined by the
  vanishing of $f_1, \ldots, f_p$ and some maximal minors of the
  Jacobian matrix associated to $f_1, \ldots, f_p$. Such a system is
  algebraically structured: the ideal it generates is the sum of a
  determinantal ideal and the ideal generated by $f_1,\ldots, f_p$.

  We provide the first complexity estimates on the computation of
  Gr\"obner bases of such systems defining critical points. We prove
  that under genericity assumptions on $f_1,\ldots, f_p$, the complexity is
  polynomial in the generic number of critical points, i.e.
  $D^p(D-1)^{n-p}{{n-1}\choose{p-1}}$. More particularly, in the
  quadratic case $D=2$, the complexity of such a Gr\"obner basis
  computation is polynomial in the number of variables $n$ and
  exponential in $p$.  We also give experimental evidence supporting
  these theoretical results.
\end{abstract}

\section{Introduction}

\textbf{Motivations and problem statement.} 
The local extrema of the restriction of a polynomial map to a real algebraic variety are reached at the critical points of the map under consideration. Hence, computing these critical points is of first importance for polynomial optimization which arises in a wide range of applications in engineering sciences (control theory, chemistry, economics, etc.).

Computing critical points is also the cornerstone of algorithms for asymptotically optimal algorithms for polynomial system solving over the reals (singly exponential in the number of variables).  Indeed, for computing sample points in each connected component of a semi-algebraic set, the algorithms based on the so-called critical point method rely on a reduction of the initial problem to polynomial optimization problems. In \cite{BPR96, BPR98} (see also \cite{GV88, HRS89, HRS93}), the best complexity bounds are obtained using infinitesimal deformation techniques of semi-algebraic geometry, nevertheless obtaining efficient implementations of these algorithms remains an issue.

Tremendeous efforts have been made to obtain fast implementations relying on the critical point method (see \cite{SaSc03, EveLazLazSaf09, S07, HS2011, HonSaf09, FauMorRouSaf08, SaSc04}). This is achieved with techniques based on algebraic elimination and complex algebraic geometry.  For instance, when the input polynomial system $(\mathbf{F}): f_1=\cdots=f_p=0$ in $\Q[x_1, \ldots, x_n]$ satisfies genericity assumptions, one is led to compute the set of critical points of the restriction of the projection $\pi_1:(x_1, \ldots, x_n)\rightarrow x_1$ to the algebraic variety $V(\F)\subset \C^n$ defined by $\mathbf{F}$; this set is denoted by $\crit(\pi_1, V(\F))$.

The set $\crit(\pi_1, V(\F))$ is defined by $\mathbf{F}$ and the vanishing of the maximal minors of the truncated Jacobian matrix of $\mathbf{F}$ obtained by removing the partial derivatives with respect to $x_1$. This system is highly-structured: algebraically, we are considering the sum of a determinantal ideal with the ideal $\langle f_1, \ldots, f_p\rangle$. 

In practice, we compute a rational parametrization of this set through Gr\"obner bases computations which are fast in practice. 
We have observed that the behavior of Gr\"obner bases on these systems does not coincide with the generic one. In the particular case of quadratic equations, it seems to be polynomial in $n$ and exponential in $p$ which meets the best complexity known bound for the quadratic minimization problem \cite{barvinok1993, GriPas05}.
Understanding the complexity of these computations is a first step towards the design of dedicated Gr\"obner bases algorithms, so we focus on the following important open problems:
\begin{itemize}
\item[{\bf (A)}] Can we provide {\em complexity estimates} for the computation of Gr\"obner bases of ideals defined by such {\em structured algebraic systems}?
\item[{\bf (B)}] Is this computation {\em polynomial in the generic number of cri\-tical points}? 
\item[{\bf (C)}] In the {\em quadratic case}, is this computation {\em polynomial in the number of variables} (and exponential in the codimension)? 
\end{itemize}
Under genericity assumptions, we actually provide affirmative answers to all these questions.

\smallskip

\textbf{Computational methodology and related complexity issues.}
Gr\"obner bases are computed using multi-modular arithmetics and we will focus only on arithmetic complexity results; so we may consider systems defining critical points with coefficients not only in $\Q$ but also in a prime field. 

Let $\mathbb{K}$ be a field, $\overline{\mathbb{K}}$ be its algebraic closure and $\mathbf{F}=(f_1, \ldots, f_p)$ be a family of polynomials in $\mathbb{K}[x_1, \ldots, x_n]$ of degree $D$ and $V(\F)$ be their set of common zeroes in $\overline{\mathbb{K}}^n$. 

We denote the Jacobian matrix
$$
\left [\begin{array}{ccc}
  \frac{\partial f_1}{\partial x_1} & \cdots &\frac{\partial f_1}{\partial x_n} \\
\vdots & &\vdots \\
  \frac{\partial f_p}{\partial x_1} & \cdots &\frac{\partial f_p}{\partial x_n} \\
\end{array}\right ]
$$
by $\jac(\F)$ and the submatrix obtained by removing the first $i$ co\-lumns by $\jac(\F, i)$. The set of maximal minors of a given rectangular matrix ${\sf M}$ will be denoted by $\MaxM({\sf M})$. 

Finally, let $\sfI(\F, 1)$ be the ideal $\langle \F\rangle+\langle \MaxM(\jac(\F, 1))\rangle$. When $\F$ is a reduced regular sequence and $V(\F)$ is smooth, the algebraic variety associated to $\sfI(\F, 1)$ is exactly $\crit(\pi_1, V(\F))$. 

So, to compute a rational parametrization of $\crit(\pi_1, V(\F))$, we use the classical solving strategy which proceeds in two steps: 
\begin{itemize}
  \item[{\em (i)}] compute a Gr\"obner basis for a {\em grevlex} ordering of $\sfI(\F, 1)$ using the ${F}_5$ algorithm (see \cite{Fau02});
  \item[{\em (ii)}] use the FGLM algorithm \cite{FauGiaLazMor93, FauMou11} to obtain a Gr\"obner basis of $\sfI(\F, 1)$ for the lexicographical ordering or a rational paramet\-rization of $\sqrt{\sfI(\F, 1)}$.  
\end{itemize}
Algorithm $F_5$ (Step {\em (i)}) computes Gr\"obner bases by
row-echelon form reductions of submatrices of the Macaulay matrix up
to a given degree.  This latter degree is
called {\em degree of regularity}. When the input satisfies regularity
properties, this complexity of this step can be analyzed by estimating the degree of regularity.

FGLM algorithm \cite{FauGiaLazMor93} (Step {\em (ii)}) and its recent efficient variant \cite{FauMou11} are based on computations of characteristic polynomials of linear endomorphisms in $\mathbb{K}[x_1, \ldots, x_n]/\sfI(\F, 1)$. This is done by performing linear algebra operations of size the {\em degree of $\sfI(\F, 1)$} (which is the number of solutions counted with multiplicities). 

Thus, we are faced to the following problems:
\begin{itemize}
\item[{\em (1)}] estimate the degree of regularity of the ideal generated by the homogeneous components of highest degree of the set of generators $\F, \MaxM(\jac(\F, 1))$; 
\item[{\em (2)}] show that the above estimation allows to bound the complexity of computing a {\em grevlex} Gr\"obner basis of $\sfI(\F, 1)$;  
\item[{\em (3)}] provide sharp bounds on the degree of the ideal $\sfI(\F, 1)$. 
\end{itemize}
As far as we know, no results are known for problems {\em (1)} and {\em (2)}. 
Problem {\em (3)} has already been investigated in the literature: see \cite{NieRan09} where some bounds are given on the cardinality of $\crit(\pi_1, V(\F))$. We give here a new algebraic proof of these bounds. 

\smallskip
  
\textbf{Main results.} 
Let $\K[x_1, \ldots, x_n]_D$ denote $\{f\in \K[x_1, \ldots, x_n]\mid \deg(f)=D\}$ and remark that it is a finite-dimensional vector space.
In the following, we solve the three aforementioned problems under a {\em genericity} assumption on $\F$: we actually prove that there exists a non-empty Zariski open set $\mathscr{O}\subset\overline{\K}[x_1, \ldots, x_n]_D^p$ such that for all $\F\in \mathscr{O}$: 
\begin{itemize}
\item[{\em (1)}] the degree of regularity of the ideal generated by the homogeneous components of largest degree of $\F, \MaxM(\jac(\F, 1))$ is $\dreg=D(p-1)+(D-2)n+2$ (see Theorem \ref{theo:hdreg});
\item[{\em (2)}] with the $F_5$ algorithm, the highest degree reached during the computation is bounded by $\dreg$ (see Theorem \ref{theo:dregaff}); 
\item[{\em (3)}] the degree of $\sfI(\F, 1)$ is $\leq \delta=D^p(D-1)^{n-p}{{n-1}\choose{p-1}}$. 
\end{itemize}
The degree of regularity given in {\em (1)} is obtained thanks to an explicit formula for the Hilbert series of the homogeneous ideal under consideration (see Proposition \ref{prop:HSunmixed}). This is obtained by taking into account the determinantal structure of some of the generators of the ideal we consider.  
The above estimates are the key results which enable us to provide positive answers to questions {\bf A}, {\bf B} and {\bf C} under genericity assumptions. 

Before stating complexity results on the computation of critical points with Gr\"obner bases, we need to introduce a standard notation. 
Let $\omega$ be a real number such that a row echelon form of a $n\times n$-matrix with entries in $\K$ is computed within $O(n^\omega)$ arithmetic operations in $\K$.

 We prove that there exists a non-empty Zariski open set $\mathscr{O}\subset \overline{\K}[x_1, \ldots, x_n]_D^p$ such that for all $\F\in \mathscr{O}\cap \K[x_1, \ldots, x_n]^p$:
\begin{itemize}
  \item[{\bf (A)}] computing a {\em grevlex} Gr\"obner basis of $\sfI(\F, 1)$ can be done within $O\left ({{n+\dreg}\choose{n}}^\omega\right )$ arithmetic operations in $\K$ (see Theorem \ref{theo:compl}); 
  \item[{\bf (B)}] computing a rational parametrization of $\crit(\pi_1, V(\F))$ using Gr\"obner bases can be done within $O\left (\delta^{4.03\omega}\right )$ arithme\-tic operations in $\K$ (see Corollary \ref{coro:uniformbound});
  \item[{\bf (C)}] when $D=2$ (quadratic case), a rational paramet\-rization of $\crit(\pi_1, V(\F))$ using Gr\"obner bases can be computed with\-in $O\left( \binom{n+2p}{2p}^\omega+n 2^{3p}\binom{n-1}{p-1}^3\right)$ arithmetic operations in $\K$, this is polynomial in $n$ and exponential in $p$ (see Corollary~\ref{coro:compln}). 
\end{itemize}
We also provide more accurate complexity results. The uniform complexity bound given for answering question {\bf (B)} is rather pessimistic. The exponent $4.03\omega$ being obtained after majorations which are not sharp; numerical experiments are given to support this (see Section \ref{sec:experiments}). Moreover, under the above genericity assumption, we prove that, when $p$ and $D$ are fixed, computing a rational parametrization of $\crit(\pi_1, V(\F))$ using Gr\"obner bases is done within $O(D^{3.57n})$ arithmetic operations in $\K$ (see Corollary \ref{coro:complD}).

We also give timings for computing grevlex and lex Gr\"obner bases of $\sfI(\F,1)$ with the \textsc{Magma} computational algebra system and with the FGb library when $\mathbb K=\mathsf{GF}(65521)$. These experiments show that the theoretical bounds on the degree of regularity and on the degree of $\sfI(\F,1)$ (Theorem \ref{theo:dregaff}) are sharp. They also provide some indication on the size of problems that can be tackled in practice: e.g. when $D=2$ and $p=3$ (resp. $D=3$ and $p=1$), random dense systems with $n\leq 21$ (resp. $n\leq 14$) can be tackled (see Section \ref{sec:experiments}).

\smallskip

\textbf{Related works.} As far as we know, dedicated complexity analysis of Gr\"obner bases on ideals defining critical points has not been investigated before. However, as we already mentioned, the determinantal structure of the system defining $\crit(\pi_1, V(\F))$ plays a central role in this paper.  

In \cite{FauSafSpa10a}, we provided complexity estimates for the computation of Gr\"obner bases of ideals generated by minors of a linear matrix. This is generalized in \cite{FauSafSpa11a} for matrices with entries of degree $D$. Nevertheless, the analysis which is done here differs significantly from these previous works. Indeed, in \cite{FauSafSpa10a, FauSafSpa11a} a genericity assumption is done on the entries of the considered matrix. We cannot follow the same reasonings since $\MaxM(\jac(\F, 1))$ depends on $\F$. Nevertheless, it is worthwhile to note that, as in \cite{FauSafSpa10a, FauSafSpa11a}, we use properties of determinantal ideals given in \cite{ConHer94}.

Bounds on the number of critical points (under genericity assumptions) are given in \cite{NieRan09} using the Giambelli-Thom-Porteous degree bounds on determinantal varieties (see \cite[Ex. 14.4.14]{Ful97}). 

In \cite{barvinok1993}, the first polynomial time algorithms in $n$ for deciding emptiness of a quadratic system of equations over the reals is given.  Further complexity results in the quadratic case for effective real algebraic geometry have been given in \cite{GriPas05}. In the general case, algorithms based on the so-called critical point method are given in \cite{BPR96, BPR98, GV88, HRS89, HRS93}. Critical points defined by systems $\F, \MaxM(\jac(\F, 1))$ are computed in algorithms given in \cite{BGHM1, BGHM2, BGHM3, BGHM4, BanGiuHeiSafSch10, SaSc03, ARS, FauMorRouSaf08}. The {\sf RAGlib} maple package implements the algorithms given in \cite{SaSc03, FauMorRouSaf08} using Gr\"obner bases. 

The systems $\F, \MaxM(\jac(\F, 1))$ define polar varieties: indeed, this notion coincides with critical points in the regular case). In \cite{BGHM1, BGHM2, BGHM3, BGHM4, BanGiuHeiSafSch10}, rational parametrizations are obtained using the geometric resolution algorithm \cite{GiuLecSal01} and a local description of these polar varieties. This leads to algorithms computing critical points running in probabilistic time polynomial in $D^p(p(D-1))^{n-p}$. Note that this bound for $D=2$ and $p=n/2$ is not satisfactory. In this paper, we also provide complexity estimations for computing critical points but using Gr\"obner bases, which is the engine we use in practice. Our results provide an explanation to the good practical behavior we have observed.

\smallskip
\textbf{Organization of the paper.}  Section \ref{sec:prelim} recalls well-known properties of generic polynomial systems.
Problems {\em (1)} and {\em (2)} mentioned above are respectively tackled in Sections \ref{sec:hom} and \ref{sec:affine}. Problem {\em (3)} is solved at the end of Section \ref{sec:affine}. Complexity results are derived in Section \ref{sec:compl}. 
Experimental results supporting the theoretical results are given in Section~\ref{sec:experiments}.

\smallskip
\textbf{Conclusions and Perspectives.} We give new bounds on the degree of regularity and an explicit formula for the Hilbert series of the ideal vanishing on the critical points under genericity assumptions. This leads to new complexity bounds for computing Gr\"obner bases of these ideals.

However, we only considered the \emph{unmixed
  case}: all polynomials $f_1,\ldots, f_p$ share the same degree $D$. The \emph{mixed case} (when the degrees of the polynomials $f_1,\ldots, f_p$ are different) cannot be treated similarly since the difference of the degrees induce a combinatorial structure which has to be investigated.
We intend to investigate this question in future works using the Eagon-Northcott complex, which yields a free resolution of the ideal generated by the maximal minors of a polynomial matrix under genericity assumptions.  
{F}rom this, we also expect to obtain a variant of the $F_5$ algorithm dedicated to these ideals. 

\medskip

\textbf{Acknowledgments.} This work was supported in part by the
GeoLMI grant (ANR 2011 BS03 011 06) and by the EXACTA grant
(ANR-09-BLAN-0371-01) of the French National Research Agency.

 \section{Preliminaries}
\label{sec:prelim}

\begin{notations}\label{notations:sec:prelim}
 The set of variables $\{x_1,\ldots, x_n\}$ is
denoted by $X$.
For $d\in \mathbb
N$, $\mon(d)$ denotes the set of monomials of degree $d$ in the
polynomial ring $\mathbb K[X]$ (where $\mathbb K$ is a
field, its algebraic closure being denoted by $\overline{\mathbb K}$).
We let $\mathfrak a$ denote the finite set of parameters $\{\mathfrak
a_{\mathfrak m}^{(i)} : 1\leq i\leq p, \mathfrak m\in \bigcup_{0\leq
  d\leq D}\mon(d)\}$.

We also introduce the following generic systems:
\begin{itemize}
\item $\Fgena=(\frakf_1,\ldots, \frakf_p)\in \mathbb K(\mathfrak a)[X]^p$ is the generic polynomial system of degree $D$:
$$\mathfrak f_i=\sum_{\substack{\mathfrak m \text{ monomial}\\\deg(\mathfrak m)\leq D}}\mathfrak a_{\mathfrak m}^{(i)}\mathfrak m; $$
\item $\Fgenh=(\frakf_1^h,\ldots, \frakf_p^h)\in \mathbb K(\mathfrak a)[X]^p$ is the generic \emph{homogeneous} polynomial system of degree $D$:
$$\mathfrak f_i=\sum_{\substack{\mathfrak m \text{ monomial}\\\deg(\mathfrak m) = D}}\mathfrak a_{\mathfrak m}^{(i)}\mathfrak m.$$
\end{itemize}

We let $V(\F)\subset \overline{\mathbb K}^n$ denote the variety of
$\F=(f_1,\ldots, f_p)$. The projective variety of a homogeneous family
of polynomials $\F^h$ is denoted by $W(\F^h)$. The projection on the
first coordinate is denoted by $\pi_1$, and the critical points of the
restriction of $\pi_1$ to $V(\F)$ are denoted by
$\crit(\pi_1,V(\F))\subset V(\F)$.  Also, $\sfI(\F,1)$ denotes the
ideal generated by $\F$ and by the maximal minors of the truncated
Jacobian matrix $\jac(\F,1)$.

Throughout the paper, if $R$ is a ring and $I\subset R$ is an ideal, we call \emph{dimension} of $I$ the Krull dimension of the quotient ring $R/I$. 
\end{notations}

The goal of this section is to prove that the ideal $\sfI(\Fgenh,1)$ is $0$-dimensional. This will be done in Lemma \ref{lem:homzerodim} below; to do that we will use geometric statements of Sard's theorem which require $\K$ to have characteristic $0$. This latter assumption can be weakened using algebraic equivalents of Sard's Theorem (see \cite[Corollary 16.23]{Eis95}).

\begin{lem}\label{lem:smooth}
  Let $\sfI(\Fgena,0)$ be the ideal generated by
  $\Fgena$ and by the maximal minors of its Jacobian matrix.  Then its
  variety $V(\sfI(\Fgena,0))\subset\overline{\mathbb K(\mathfrak a)}^n$ is empty
  and hence $V(\Fgena)$ is \emph{smooth}.
\end{lem}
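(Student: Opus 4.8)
The plan is to recognize the statement as a parametric transversality (Bertini-type) result and to exhibit $V(\Fgena)$ as the generic fibre of a universal family. Let $\mathbb{A}^N$, with $N=p\binom{n+D}{n}$, be the affine space of coefficient vectors of a system of $p$ polynomials of degree $\leq D$ in $X$; write $\F_{\mathbf a}$ for the system attached to $\mathbf a\in\mathbb{A}^N$, so that $\mathbb K(\mathfrak a)$ is the function field of $\mathbb{A}^N$ and $\Fgena$ is the system $\F_{\mathbf a}$ read off at the generic point of $\mathbb{A}^N$. Consider the evaluation morphism $\Phi\colon\overline{\mathbb K}^n\times\mathbb{A}^N\to\overline{\mathbb K}^p$ sending $(x,\mathbf a)$ to the $p$-tuple of values of $\F_{\mathbf a}$ at $x$, let $Z=\Phi^{-1}(0)$ be the universal solution variety, and let $\rho\colon Z\to\mathbb{A}^N$ be the second projection, whose fibre over $\mathbf a$ is $V(\F_{\mathbf a})$.

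First I would check that $\Phi$ is a submersion at every point: differentiating with respect to the $p$ constant-term coordinates (the coefficients of the monomial $1$ in each polynomial) yields the identity block of size $p$ inside the Jacobian of $\Phi$. Hence $Z$ is smooth, and projecting $Z$ onto its $x$-factor exhibits it as a family of affine spaces of dimension $N-p$ over $\overline{\mathbb K}^n$ (the $p$ constant-term coefficients being determined by the remaining ones), so $Z$ is irreducible of dimension $N+n-p$. I would then apply generic smoothness --- Sard's theorem, or its algebraic counterpart \cite[Cor.~16.23]{Eis95}, using $\mathrm{char}\,\mathbb K=0$ --- to $\rho$: there is a non-empty Zariski-open subset $\mathscr U\subset\mathbb{A}^N$ over which $\rho$ is smooth. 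For $\mathbf a\in\mathscr U$ the fibre $V(\F_{\mathbf a})=\rho^{-1}(\mathbf a)$ is smooth of pure dimension $n-p$ (or empty), so its tangent space at each point has dimension $n-p$; hence the full Jacobian $\jac(\F_{\mathbf a})$ has rank exactly $p$ at that point and no maximal minor of it vanishes. In other words, $V(\sfI(\F_{\mathbf a},0))=\emptyset$ for every $\mathbf a\in\mathscr U$.

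The final step transfers this to $\Fgena$: since $\mathbb{A}^N$ is irreducible and $\mathscr U$ is a non-empty open subset, the generic point of $\mathbb{A}^N$ lies in $\mathscr U$; equivalently, the ``bad'' locus inside $Z$ --- the points $(x,\mathbf a)$ at which all maximal minors of $\jac(\F_{\mathbf a})$ vanish --- is a constructible set whose projection to $\mathbb{A}^N$ has proper closure, hence misses the generic point. Therefore $V(\sfI(\Fgena,0))=\emptyset$ as a $\mathbb K(\mathfrak a)$-scheme, hence also over $\overline{\mathbb K(\mathfrak a)}$ by the Nullstellensatz; and by the Jacobian criterion $V(\Fgena)$ is smooth.

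I expect the only genuinely delicate point to be this transfer from ``the property holds on a dense open subset of parameters'' to ``the property holds for the system with indeterminate coefficients'', i.e. using correctly the dictionary between generic members of a family and the generic point of the base. A secondary point to handle with care is that a smooth fibre really forces full Jacobian rank, which relies on all fibres over $\mathscr U$ having the expected dimension $n-p$ (part of the smoothness of $\rho$). If one prefers to avoid Sard's theorem entirely, and thus any characteristic hypothesis, the same conclusion follows from a dimension count on the bad incidence variety $\mathcal B=\{(x,\mathbf a)\in\overline{\mathbb K}^n\times\mathbb{A}^N : x\in V(\sfI(\F_{\mathbf a},0))\}$: for fixed $x$, the $p$ equations expressing that the $i$-th polynomial of $\F_{\mathbf a}$ vanishes at $x$ are independent linear conditions on $\mathbf a$, and --- varying only the degree-one coefficients of each polynomial and then adjusting its constant term to preserve those vanishings --- the condition that $\jac(\F_{\mathbf a})$ has rank $<p$ at $x$ cuts a further codimension $n-p+1$; thus each fibre of $\mathcal B$ over a point $x$ has codimension at least $n+1$ in $\mathbb{A}^N$, whence $\dim\mathcal B\leq N-1<\dim\mathbb{A}^N$ and $\mathcal B$ cannot dominate $\mathbb{A}^N$.
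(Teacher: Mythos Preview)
Your argument is correct and rests on the same two ingredients as the paper --- Sard/generic smoothness together with the observation that varying the constant terms already makes the relevant map a submersion --- but you package them differently. The paper separates the $p$ constant-term coefficients $\mathfrak a_1^{(1)},\ldots,\mathfrak a_1^{(p)}$ from the remaining parameters $\mathscr A$, works directly over $\overline{\mathbb K(\mathscr A)}$, and applies Sard to the \emph{evaluation} map $\psi:\overline{\mathbb K(\mathscr A)}^{\,n}\to\overline{\mathbb K(\mathscr A)}^{\,p}$, $x\mapsto(h_1(x),\ldots,h_p(x))$; the conclusion then follows because the transcendental tuple $(-\mathfrak a_1^{(1)},\ldots,-\mathfrak a_1^{(p)})$ cannot lie in the proper closed set of critical values. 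You instead set up the full universal family over $\mathbb A^N$ and apply generic smoothness to the \emph{projection} $\rho:Z\to\mathbb A^N$, then pass to the generic point. The paper's route is marginally more economical (Sard is applied to a map between spaces of dimensions $n$ and $p$, not $N+n-p$ and $N$) and makes the role of the constant terms very explicit; your route is the standard Bertini setup, arguably cleaner conceptually, and your closing dimension count on the incidence variety $\mathcal B$ is a genuine bonus, since it gives a characteristic-free proof that the paper does not provide (it only alludes to \cite[Cor.~16.23]{Eis95} as a possible substitute). The ``delicate point'' you flagged --- transferring from a dense open set of parameters to the generic point --- is handled in the paper by the single sentence ``there is no algebraic relation between $\mathfrak a_1^{(1)},\ldots,\mathfrak a_1^{(p)}$ and the parameters in $\mathscr A$'', which is exactly your generic-point argument specialized to the last $p$ coordinates.
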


\begin{proof}
  To simplify notations hereafter, we denote by $h_1, \ldots, h_p$ the polynomials obtained from ${\frak f}_1, \ldots, {\frak f}_p$ by removing their respective constant terms ${\frak a}_1^{(1)}, \ldots , {\frak a}_1^{(p)}$. We will also denote by $\mathscr{A}$ the remaining parameters in $h_1, \ldots, h_p$.
Let $\psi$ denote the mapping
$$\begin{array}{r@{~}c@{~}c@{~}c}
  \psi :& \overline{\mathbb K(\mathscr{A})}^n &\longrightarrow&\overline{\mathbb K(\mathscr{A})}^p\\
  & \mathbf c&\longmapsto& (h_1(\mathbf c),\ldots,h_p(\mathbf c))
\end{array}$$

Suppose first that $\psi(\overline{\mathbb K(\mathscr{A})}^n)$ is not
dense (for the Zariski topology) in $\overline{\mathbb K(\mathscr{A})}^p$. Since the image
$\psi(\overline{\mathbb K(\mathscr{A})}^n)$ is a constructible set, it
is contained in a proper Zariski closed subset $\mathscr W\subset \overline{\mathbb
  K(\mathscr{A})}^p$. Since there is no algebraic relation between 
${\frak a}_1^{(1)}, \ldots , {\frak a}_1^{(p)}$ and the parameters in $\mathscr{A}$, this implies that the variety defined by $h_1+{\frak a}_1^{(1)}= \cdots= h_p+{\frak a}_1^{(p)}$ is empty and consequently smooth. Since $h_i+{\frak a}_i^{(1)}= {\frak f}_i$, our statement follows.

Suppose now that $\psi(\overline{\mathbb K(\mathscr{A})}^n)$ is dense in
$\overline{\mathbb K(\mathscr{A})}^p$.  Let $K_0\subset
\overline{\mathbb K(\mathscr{A})}^p$ be the set of critical values of
$\psi$.  By Sard Theorem \cite[Chap. 2, Sec. 6.2, Thm 2]{Sha88}, $K_0$
is contained in a proper closed subset of $\overline{\mathbb
  K(\mathscr{A})}^p$. Again, there is no algebraic relation between 
${\frak a}_1^{(1)}, \ldots , {\frak a}_1^{(p)}$ and the parameters in $\mathscr{A}$. 
Consequently, the variety
associated to the ideal generated by the system $\mathfrak
f_1,\ldots,\mathfrak f_p$ and by the
maximal minors of $\jac({\frak F})$ is empty. 
\end{proof}

\begin{coro}\label{coro:smoothHom}
Let $\sfI(\Fgenh,0)$
be the ideal generated by $\Fgenh$ and by the maximal minors of its Jacobian
matrix.  Then the associated projective variety $W(\sfI(\Fgenh,0))\subset\mathbb P^{n-1} \overline{\mathbb K(\mathfrak a)}$ is empty.
\end{coro}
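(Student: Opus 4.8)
The plan is to deduce the homogeneous statement from the affine Lemma \ref{lem:smooth} by relating the projective variety $W(\sfI(\Fgenh,0))$ to the affine variety $V(\sfI(\Fgena,0))$ via a specialization argument. First I would observe that $\Fgenh$ is obtained from $\Fgena$ by keeping only the degree-$D$ homogeneous components, i.e.\ by setting to zero all the parameters $\mathfrak a_{\mathfrak m}^{(i)}$ attached to monomials $\mathfrak m$ of degree strictly less than $D$. Since the remaining parameters (those attached to degree-$D$ monomials) stay algebraically independent, the key point is that genericity is preserved under this particular specialization: a nonempty Zariski open condition on $\overline{\K}[x_1,\ldots,x_n]_D^p$ that holds for the generic dense system also holds for the generic homogeneous system, because the homogeneous-coefficient subspace meets that open set. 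Concretely, $V(\sfI(\Fgena,0)) = \emptyset$ means the polynomial identity $1 \in \sfI(\Fgena,0)$ (a Nullstellensatz certificate over $\overline{\K(\mathfrak a)}$), and specializing the low-degree parameters to $0$ either preserves this certificate or, if denominators vanish, one argues directly.

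A cleaner route, which I would actually carry out, is to argue geometrically. The maximal minors of $\jac(\Fgenh)$ and the polynomials $\frakf_i^h$ are all homogeneous in $X$, so $\sfI(\Fgenh,0)$ is a homogeneous ideal; hence $W(\sfI(\Fgenh,0)) \subset \mathbb P^{n-1}$ is empty if and only if the affine cone $V(\sfI(\Fgenh,0)) \subset \overline{\K(\mathfrak a)}^n$ is contained in $\{0\}$. So it suffices to show that the only common solution of $\frakf_1^h = \cdots = \frakf_p^h = 0$ together with the vanishing of $\MaxM(\jac(\Fgenh))$ is the origin. Suppose $\mathbf c \neq 0$ were such a point. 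Then consider the affine system $\sfI(\Fgena,0)$ and specialize its low-degree parameters so that $\frakf_i$ becomes $\frakf_i^h$ plus a carefully chosen inhomogeneous tail; more efficiently, apply Lemma \ref{lem:smooth} after noting that a nonzero point on the cone of the leading forms would, by the standard relation between a generic affine system and its leading forms, contradict the emptiness of $V(\sfI(\Fgena,0))$ established there. The partial derivatives satisfy $\partial \frakf_i / \partial x_j$ specializing to $\partial \frakf_i^h/\partial x_j$, so the Jacobian minors specialize compatibly.

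I expect the main obstacle to be the bookkeeping of the specialization of parameters: one must check that setting the low-degree $\mathfrak a_{\mathfrak m}^{(i)}$ to zero does not fall outside the locus where the argument of Lemma \ref{lem:smooth} applies — in particular that the transcendence-independence of the surviving parameters $\{\mathfrak a_{\mathfrak m}^{(i)} : \deg \mathfrak m = D\}$ over $\K$ still yields the needed Zariski-genericity, and that the Nullstellensatz certificate (or Sard-type argument) survives the specialization. One safe way to handle this is to run the proof of Lemma \ref{lem:smooth} directly with $h_i$ replaced by the homogeneous $\frakf_i^h$ of degree $D$: the map $\psi : \mathbf c \mapsto (\frakf_1^h(\mathbf c),\ldots,\frakf_p^h(\mathbf c))$ is again generically dominant (its coefficients being independent transcendentals), Sard's theorem applies over the characteristic-zero field $\overline{\K(\mathfrak a)}$, and one concludes that the affine cone $V(\sfI(\Fgenh,0))$ is empty, hence in particular contains no point of $\mathbb P^{n-1}$, giving $W(\sfI(\Fgenh,0)) = \emptyset$.
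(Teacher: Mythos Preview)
Your ``cleaner route'' has a genuine gap. In the proof of Lemma \ref{lem:smooth}, the crucial feature is that $\frakf_i = h_i + \mathfrak a_1^{(i)}$ with the constant terms $\mathfrak a_1^{(i)}$ algebraically independent of the remaining parameters; this makes $V(\Fgena)$ the fibre of $\psi$ over the \emph{generic} point $(-\mathfrak a_1^{(1)},\ldots,-\mathfrak a_1^{(p)})$, and Sard applies. If you replace $h_i$ by the homogeneous $\frakf_i^h$, there are no constant terms to split off: $V(\Fgenh)$ is the fibre $\psi^{-1}(0)$ over the \emph{specific} point $0$, and $0$ is always a critical value of $\psi$ because the origin lies in $\psi^{-1}(0)$ and $\jac(\Fgenh)$ vanishes there (each entry is homogeneous of degree $D-1\ge 1$). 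So Sard tells you nothing about this fibre, and in fact the affine cone $V(\sfI(\Fgenh,0))$ is never empty --- it always contains the origin --- contradicting your final sentence. The same observation kills the Nullstellensatz-specialization idea: since every generator of $\sfI(\Fgenh,0)$ is homogeneous of positive degree, $1\notin \sfI(\Fgenh,0)$, so a certificate $1\in \sfI(\Fgena,0)$ cannot survive setting the low-degree parameters to zero.

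The paper avoids this by passing to the standard affine charts $O_i=\{x_i\neq 0\}$ of $\mathbb P^{n-1}$ rather than to the affine cone. Substituting $x_i=1$ in $\Fgenh$ produces a system $\Fgena_i$ in the remaining $n-1$ variables; the map $\mathfrak m\mapsto \mathfrak m|_{x_i=1}$ is a bijection between degree-$D$ monomials in $n$ variables and monomials of degree $\le D$ in $n-1$ variables, so $\Fgena_i$ is again a generic (non-homogeneous) system with independent transcendental coefficients, including independent constant terms. Lemma \ref{lem:smooth} then gives $V(\sfI(\Fgena_i,0))=\emptyset$, hence $W(\sfI(\Fgenh,0))\cap O_i=\emptyset$ for every $i$, and the projective variety is empty. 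The dehomogenization is what restores the constant terms needed for the Sard argument; your direct approach on the cone cannot recover them.
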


\begin{proof}
  For $1\leq i\leq n$, we denote by $O_i$ the set  $$\{(c_1:\ldots:c_n)\mid c_i\neq 0\}\subset \mathbb P^{n-1}\overline{\mathbb
  K(\mathfrak a)}$$ and we consider the canonical open covering of
  $\mathbb P^{n-1} \overline{\mathbb K(\mathfrak a)}$:
$$\mathbb P^{n-1} \overline{\mathbb K(\mathfrak a)} = \bigcup_{1\leq i\leq n} O_i. $$
Therefore $W(\sfI(\Fgenh,0))=\bigcup_{1\leq i\leq n}(W(\sfI(\Fgenh,0))\cap O_i)$.
Denote by $\Fgena_i$ the system obtained by substituting the variable $x_i$ by $1$ in $\Fgenh$. According to 
Lemma \ref{lem:smooth} applied to $\Fgena_i$, the variety $V(\sfI(\Fgena_i,0))$ is empty. Therefore, the set $W(\sfI(\Fgenh,0))\cap O_i$ is also empty.
Consequently,
$W(\sfI(\Fgenh,0))=\emptyset$. 
\end{proof}
We can now deduce the following result. 
\begin{lem}\label{lem:homzerodim}
The projective variety $W(\sfI(\Fgenh, 1)) \subset\mathbb P^{n-1}\overline{\mathbb K(\mathfrak a)}$ 
is empty, and hence $\dim(\sfI(\Fgenh, 1))=0$.
\end{lem}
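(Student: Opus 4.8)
The plan is to reduce the statement about $W(\sfI(\Fgenh,1))$ to the already-established emptiness of $W(\sfI(\Fgenh,0))$ from Corollary \ref{coro:smoothHom}. The key observation is that the ideal $\sfI(\Fgenh,1)$ is generated by $\Fgenh$ together with the maximal minors of the \emph{truncated} Jacobian $\jac(\Fgenh,1)$ (first column removed), whereas $\sfI(\Fgenh,0)$ uses the maximal minors of the full Jacobian $\jac(\Fgenh)$. So the containment $\sfI(\Fgenh,0)\supset\sfI(\Fgenh,1)$ fails in general; instead I would argue geometrically on $\mathbb P^{n-1}$. A point $\zeta\in W(\sfI(\Fgenh,1))$ is a point of $W(\Fgenh)$ at which the truncated Jacobian $\jac(\Fgenh,1)$ drops rank, i.e. has rank $\leq p-1$. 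I want to show no such $\zeta$ exists.

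First I would split into two cases according to whether the full Jacobian $\jac(\Fgenh)$ has rank $p$ or rank $\leq p-1$ at $\zeta$. In the second case, $\zeta$ lies in $W(\sfI(\Fgenh,0))$, which is empty by Corollary \ref{coro:smoothHom}; contradiction. So it suffices to treat the case where $\jac(\Fgenh)$ has full rank $p$ at $\zeta$ but the submatrix $\jac(\Fgenh,1)$ obtained by deleting the first column has rank exactly $p-1$. Geometrically this says that $\zeta$ is a critical point of the restriction of $\pi_1$ to $W(\Fgenh)$ (which is smooth away from the origin by Corollary \ref{coro:smoothHom}), i.e. the tangent space $T_\zeta W(\Fgenh)$ is contained in the hyperplane $\{x_1=0\}$ of directions — equivalently $\pi_1$ is singular there. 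The task is then to rule out the existence of such a critical point for the generic homogeneous system.

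The cleanest route here is again via Lemma \ref{lem:smooth}, used through the open covering trick of Corollary \ref{coro:smoothHom}: cover $\mathbb P^{n-1}$ by the charts $O_i=\{x_i\neq 0\}$, dehomogenize, and in each chart interpret the dropping of rank of the truncated Jacobian as the emptiness of the variety of an affine system of the form "$\Fgena_i = 0$ and the appropriate maximal minors vanish", which Lemma \ref{lem:smooth} (or a Sard/critical-values argument of the same flavour, applied to the map $\pi_1$ restricted to $V(\Fgena_i)$) forces to be empty because there is no algebraic relation between the constant-term parameters and the remaining ones. One has to be slightly careful in the chart $O_1$, where dehomogenizing with respect to $x_1$ makes $\pi_1$ constant; but there the truncated Jacobian condition becomes the statement that $V(\Fgena_1)$ itself is singular, which is exactly Lemma \ref{lem:smooth}. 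Putting the charts together yields $W(\sfI(\Fgenh,1))=\emptyset$, and emptiness of the projective variety of a homogeneous ideal is equivalent to that ideal being $0$-dimensional (its radical contains the irrelevant maximal ideal), giving $\dim(\sfI(\Fgenh,1))=0$.

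I expect the main obstacle to be the case analysis at a point where $\jac(\Fgenh)$ has full rank $p$: one must correctly phrase "the truncated Jacobian drops rank" as a genuine critical-point / critical-value condition for $\pi_1$ on the smooth variety $W(\Fgenh)$ and then invoke the appropriate generic-smoothness statement — mirroring the Sard-theorem argument of Lemma \ref{lem:smooth} rather than Lemma \ref{lem:smooth} verbatim, since here the relevant map is $\pi_1$ and not the defining map $\psi$. Handling the homogeneity (so that the origin, always in $W(\Fgenh)$, is correctly excluded when passing to $\mathbb P^{n-1}$) and the exceptional behaviour of the chart $O_1$ are the two points that need care; everything else is the same covering-and-dehomogenization bookkeeping already used in Corollary \ref{coro:smoothHom}.
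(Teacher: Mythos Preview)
Your overall plan --- reduce to Lemma~\ref{lem:smooth} and Corollary~\ref{coro:smoothHom} via a cover-and-dehomogenize argument --- is the right one, and you correctly identify that the chart $O_1=\{x_1\neq 0\}$ is handled cleanly: setting $x_1=1$ turns $\jac(\Fgenh,1)$ into the \emph{full} Jacobian of $\varphi_1(\Fgenh)$ in the $n-1$ variables $x_2,\ldots,x_n$, so $\varphi_1(\sfI(\Fgenh,1))=\sfI(\varphi_1(\Fgenh),0)$ and Lemma~\ref{lem:smooth} applies. The gap is in the other charts. In $O_i$ with $i\neq 1$ the column $\partial/\partial x_i$ of $\jac(\Fgenh,1)$ survives dehomogenization and is \emph{not} a partial derivative of $\Fgena_i$, so the rank-drop condition is not literally ``$\zeta$ is a critical point of $\pi_1$ on $V(\Fgena_i)$''. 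More importantly, even if it were, a Sard argument for $\pi_1$ controls critical \emph{values}, not critical \emph{points}: the critical locus of $\pi_1$ on the generic smooth affine variety $V(\Fgena_i)\subset\overline{\K(\mathfrak a)}^{n-1}$ is exactly $V(\sfI(\Fgena_i,1))$, which is $0$-dimensional of degree $\binom{n-2}{p-1}D^p(D-1)^{n-1-p}$ and hence \emph{not} empty. So the step ``a Sard/critical-values argument \ldots\ forces [it] to be empty'' fails as stated. (One can rescue the chart $O_i$ using Euler's identity, which on $V(\Fgena_i)$ rewrites the $x_i$-column as $-\sum_{j\neq i}x_j\,\partial/\partial x_j$; this shows the truncated Jacobian has full rank whenever $x_1\neq 0$, reducing again to chart $O_1$ --- but you do not invoke this.)

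The paper sidesteps the whole issue by choosing a different decomposition. Instead of the cover $O_1,\ldots,O_n$, it splits $\mathbb P^{n-1}$ into $O_1=\{x_1\neq 0\}$ and the hyperplane $\{x_1=0\}\cong\mathbb P^{n-2}$. The key observation is that \emph{both} substitutions $x_1\mapsto 1$ and $x_1\mapsto 0$ convert $\jac(\Fgenh,1)$ (columns $2,\ldots,n$) into the full Jacobian of the specialized system in $x_2,\ldots,x_n$: one gets $\varphi_1(\sfI(\Fgenh,1))=\sfI(\varphi_1(\Fgenh),0)$ with empty affine variety by Lemma~\ref{lem:smooth}, and $\varphi_0(\sfI(\Fgenh,1))=\sfI(\varphi_0(\Fgenh),0)$ with empty projective variety by Corollary~\ref{coro:smoothHom}. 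With this decomposition your preliminary rank-splitting (full Jacobian rank $p$ versus $<p$) is unnecessary, and no new Sard-type input beyond Lemma~\ref{lem:smooth} is required.
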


\begin{proof}
  We let $\varphi_0$ and $\varphi_1$ denote the two following morphisms:
$$\begin{array}{rrcl}
\varphi_0:&\mathbb K(\mathfrak a)[x_1,\ldots, x_n]&\rightarrow&\mathbb K(\mathfrak a)[x_2,\ldots, x_n]\\
&g(x_1,\ldots, x_n)&\mapsto&g(0,x_2,\ldots, x_n)\\
\\
\varphi_1:&\mathbb K(\mathfrak a)[x_1,\ldots, x_n]&\rightarrow&\mathbb K(\mathfrak a)[x_2,\ldots, x_n]\\
&g(x_1,\ldots, x_n)&\mapsto&g(1,x_2,\ldots, x_n)
\end{array}$$
Then $W(\sfI(\Fgenh, 1))$ can be identified with the disjoint union of the variety $V(\varphi_1(\sfI(\Fgenh, 1)))\subset \overline{\mathbb K(\mathfrak a)}^{n-1}$ and the projective variety $W(\varphi_0(\sfI(\Fgenh, 1)))\subset \mathbb P^{n-2}\overline{\mathbb K(\mathfrak a)}$.

\begin{itemize}
\item Notice that $\varphi_1(\sfI(\Fgenh,
  1))=\sfI(\varphi_1(\Fgenh),0).$ Therefore, the ideal $\varphi_1(\sfI(\Fgenh,
  1))\subset\overline{\mathbb K(\mathfrak a)}[x_2,\ldots,x_n]$ is
  spanned by $\varphi_1(\F^h)$ (which is a generic system of degree
  $D$ in $n-1$ variables) and by the maximal minors of its Jacobian matrix.  According to
  Lemma \ref{lem:smooth}, the variety $V(\varphi_1(\sfI(\Fgenh, 1)))$
  is empty.
\item Similarly, $\varphi_0(\sfI(\Fgenh, 1))=\sfI(\varphi_0(\Fgenh),0)\subset
  \mathbb K(\mathfrak a)[x_2,\ldots, x_n]$ is generated by the
  homogeneous polynomials $\varphi_0(\Fgenh)$ and by the
  maximal minors of the Jacobian matrix $\jac(\varphi_0(\Fgenh))$. Thus,  
  according to Corollary
  \ref{coro:smoothHom}, the variety $W(\varphi_0(\sfI(\Fgenh, 1)))$ is
  also empty.
\end{itemize}
\end{proof}

 \section{The homogeneous case}
\label{sec:hom}

In this section, our goal is to estimate the degree of regularity of
the ideal $\sfI(\Fgenh, 1)\subset \mathbb K(\mathfrak a)[X]$ which is
a homogeneous ideal generated by $\Fgenh$ and $\MaxM(\Fgenh, 1)$ (see
Notations \ref{notations:sec:prelim}). Recall that the degree of
regularity $\dreg(I)$ of a 0-dimensional homogeneous ideal $I$ is the
smallest positive integer such that all monomials of
degree $\dreg(I)$ are in $I$. Notice that $\dreg(I)$ is an upper bound
on the degrees of the polynomials in a minimal Gr\"obner basis of $I$
with respect to the grevlex ordering.

\begin{theo}\label{theo:hdreg}
The degree of regularity of the ideal $\sfI(\Fgenh, 1)$ is
$$\dreg(\sfI(\Fgenh, 1)) = D(p-1) + (D-2)n +2.$$
\end{theo}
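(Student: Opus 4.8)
The plan is to derive the degree of regularity from an explicit formula for the Hilbert series of $\sfI(\Fgenh, 1)$. By Lemma~\ref{lem:homzerodim}, this ideal is $0$-dimensional, so its Hilbert series $\HS(z)$ is a polynomial in $z$; the degree of regularity $\dreg(\sfI(\Fgenh,1))$ is then exactly $1 + \deg \HS(z)$, since $\dreg$ is the smallest integer $d$ such that the degree-$d$ graded piece of the quotient vanishes. Thus the whole theorem reduces to computing $\HS(z)$ and reading off its degree, which is presumably the content of the forthcoming Proposition~\ref{prop:HSunmixed}.

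First I would set up the algebraic structure: $\sfI(\Fgenh,1) = \langle \Fgenh \rangle + \langle \MaxM(\jac(\Fgenh,1))\rangle$ is a sum of a complete intersection ideal (the generic forms $\frakf_1^h,\ldots,\frakf_p^h$ of degree $D$, which form a regular sequence) and the determinantal ideal generated by the $p\times p$ minors of the $p\times(n-1)$ truncated Jacobian matrix $\jac(\Fgenh,1)$, whose entries are generic-looking forms of degree $D-1$. The key structural input — exactly as in the authors' earlier papers \cite{FauSafSpa10a,FauSafSpa11a} and relying on \cite{ConHer94} — is that the determinantal ideal behaves generically: it has the expected codimension $n-1-(p-1) = n-p$ (the codimension of the generic determinantal variety for a $p\times(n-1)$ matrix), and its minimal free resolution is governed by the Eagon–Northcott complex, which gives a closed form for its Hilbert series. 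Next I would argue that adding the complete intersection $\langle \Fgenh\rangle$ is "transverse" in the homological sense: the combined system is such that the Koszul-type relations and the Eagon–Northcott relations do not interact beyond the obvious ones, so the Hilbert series of the sum is obtained by multiplying the determinantal Hilbert series by $(1-z^D)^p$ (the contribution of the $p$ generic forms of degree $D$ cutting a regular sequence on the determinantal quotient). This multiplicativity is the genericity statement that needs justification, and it is the heart of Proposition~\ref{prop:HSunmixed}.

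Concretely, the Hilbert series of the quotient by the $p\times p$ minors of a $p\times(n-1)$ matrix of forms of degree $D-1$ — obtained from the Eagon–Northcott complex — is a rational function of the form $N(z)/(1-z)^n$ (or with an appropriate power of $(1-z)$ reflecting the codimension), and after multiplying by $(1-z^D)^p = \big((1-z)(1+z+\cdots+z^{D-1})\big)^p$ the result must collapse to a polynomial, consistent with $0$-dimensionality. I would then compute the degree of that polynomial: tracking the top-degree contributions, the $p$ forms of degree $D$ contribute $p(D-1)$ to the "socle degree" beyond the linear case, the $n-1$ columns' worth of degree-$(D-1)$ entries in the Eagon–Northcott tail contribute the bulk, and careful bookkeeping should yield $\deg\HS(z) = D(p-1) + (D-2)n + 1$, whence $\dreg = D(p-1)+(D-2)n+2$. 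A useful sanity check: for $D=2$ this gives $\dreg = 2(p-1) + 0 + 2 = 2p$, matching the expectation that the quadratic case behaves like a system concentrated in low degree; and one can also cross-check the formula against the known degree bound $\delta = D^p(D-1)^{n-p}\binom{n-1}{p-1}$ via $\sum_d \dim(R/\sfI)_d = \delta$.

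The main obstacle is the second step: proving that the generic Jacobian minors $\MaxM(\jac(\Fgenh,1))$ actually realize the \emph{generic} determinantal behavior (expected codimension and Eagon–Northcott resolution), and that $\langle\Fgenh\rangle$ meets the determinantal quotient in a regular sequence. Unlike in \cite{FauSafSpa10a,FauSafSpa11a}, the matrix entries here are not independent generic forms — they are the partial derivatives $\partial \frakf_i^h/\partial x_j$, which are constrained (e.g. Euler's relation $\sum_j x_j\,\partial\frakf_i^h/\partial x_j = D\frakf_i^h$ ties the truncated derivatives to the $f_i$ themselves). So one must show that, despite these dependencies, the relevant generic-rank and depth conditions still hold for $\Fgena$ (working over $\K(\mathfrak a)$ so that genericity is automatic), probably by an explicit specialization argument exhibiting one system where the codimension is as expected, combined with semicontinuity — much in the spirit of the specialization lemmas already used to prove Lemma~\ref{lem:smooth} and Lemma~\ref{lem:homzerodim}. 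Once the Hilbert series formula of Proposition~\ref{prop:HSunmixed} is in hand, extracting $\dreg$ is a routine degree computation.
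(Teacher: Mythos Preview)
Your high-level strategy matches the paper exactly: reduce to computing $\HS_{\sfI(\Fgenh,1)}(t)$ via Proposition~\ref{prop:HSunmixed}, then read off $\dreg = 1 + \deg\HS$ using the $0$-dimensionality from Lemma~\ref{lem:homzerodim}. The final degree count is the same.

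The genuine difference is in how the key technical obstacle---that the Jacobian entries are \emph{not} independent generic forms---is resolved. You propose to argue directly that $\MaxM(\jac(\Fgenh,1))$ nonetheless has the generic Eagon--Northcott behavior, via an explicit specialization plus semicontinuity. The paper does something slicker: it introduces fresh indeterminates $u_{i,j}$ (Notations~\ref{not:2}) and works in $\K(\mathfrak a)[U,X]$ with the \emph{truly} generic determinantal ideal $\mathcal D$ of the matrix $(u_{i,j})$, for which the Hilbert series is known unconditionally from \cite{ConHer94}. The passage from $\mathcal D$ to $\sfI(\Fgenh,1)$ is then effected by adding the $p(n-1)$ ``substitution'' equations $\frakg_k = u_{i,j}-\partial\frakf_i^h/\partial x_j$ together with the $p$ forms $\frakf_i^h$, and the entire burden shifts to showing that these $pn$ quasi-homogeneous polynomials form a regular sequence on $\K(\mathfrak a)[U,X]/\mathcal D$ (Lemma~\ref{lem:ndivzero}). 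That lemma is proved by a pure dimension argument: the quotient by $\mathcal D$ is Cohen--Macaulay of dimension $np$ (Hochster--Eagon), and the final ideal $\frakI_{(pn)}$ is $0$-dimensional because its elimination to $\K(\mathfrak a)[X]$ is $\sfI(\Fgenh,1)$, already known to be $0$-dimensional by Lemma~\ref{lem:homzerodim}. The unmixedness theorem then forces each $\frakg_\ell$ to be a non-zero-divisor. A weighted grading (Section~\ref{subsec:quasihomogeneous}) handles the degree bookkeeping, and Lemma~\ref{lem:HSeq} identifies $\wHS_{\frakI_{(pn)}}$ with $\HS_{\sfI(\Fgenh,1)}$.

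What the paper's route buys: no need to exhibit any explicit good specialization or to verify Eagon--Northcott exactness for the constrained Jacobian matrix; the geometric input (Sard's theorem, already used for Lemma~\ref{lem:homzerodim}) is recycled to force regularity. What your route would buy, if carried out: a more direct statement about the Jacobian minors themselves, without the auxiliary ring---but at the cost of a harder genericity verification, precisely because of constraints like Euler's relation that you correctly flag.
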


\begin{notations}\label{not:2} To prove Theorem \ref{theo:hdreg}, we need to introduce a few more objects and notations. 
\begin{itemize}
  \item A set of new variables $\{u_{i,j} : 1\leq i\leq p, 2\leq j\leq n\}$ which is denoted by $U$;
  \item the determinantal ideal $\mathcal D\subset \K[U]$ generated by the maximal minors of the matrix
$$\left [\begin{array}{ccc}
  u_{1,2}&\dots&u_{1,n}\\
  \vdots&\vdots&\vdots\\
  u_{p,2}&\dots&u_{p,n}\\
\end{array}\right ].$$
\item $\frakg_1,\ldots, \frakg_{p(n-1)}\in\K(\mathfrak a)[U,X]$
  which denote the polynomials $u_{i,j}-\frac{\partial {\frakf}^h_i}{x_j}$,  for $1\leq i\leq
  p, 2\leq j\leq n$ and $\frakg_{p(n-1)+1},\ldots, \frakg_{pn}$ which denote the
  polynomials $\frakf^h_1,\ldots, \frakf^h_p$;
\item the ideals $\frakI_{(\ell)}=\mathcal D +\langle \frakg_1,\ldots,\frakg_\ell\rangle\subset \mathbb K(\mathfrak a)[U,X]$;
\item if $g\in\mathbb K[X]$ (resp. $I\subset\mathbb K[X]$) is a polynomial and $\prec$ is a monomial ordering (see e.g. \cite[Ch. 2, \S 2, Def. 1]{CoxLitShe97}), $\LM_\prec(g)$ (resp. $\LM_\prec(I)$) denotes its leading monomial (resp. the ideal generated by the leading monomials of the polynomials in $I$);
\item a \emph{degree ordering} is a monomial ordering $\prec$ such that for all pair of monomials $m_1,m_2\in \mathbb K[X]$, $\deg(m_1)<\deg(m_2)$ implies $m_1\prec m_2$.
\end{itemize}
\end{notations}
 Obviously the polynomials $\frakg_{k}$ for $1\leq k \leq p(n-1)$ will be used to mimic the process of substituting the new variables $u_{i,j}$ by $\frac{\partial {\frakf}^h_i}{x_j}$; indeed we have $\frakI_{(pn)}\cap \mathbb K[X]=\sfI(\Fgenh, 1)$.

Our strategy to prove Theorem \ref{theo:hdreg} will be to deduce the degree of regularity of $\sfI(\Fgenh,1)$ from an explicit form of its {\em Hilbert series}. 

Recall that, if $I$ is a homogeneous ideal of a polynomial ring $R$ with ground field $\K$, its \emph{Hilbert series} is the series
$$\HS_I(t)=\sum_{d\in\mathbb N} \dim_{\mathbb K}(R_d/I_d) t^d,$$
where $R_d$ denotes the $\mathbb K$-vector space of homogeneous polynomials of degree $d$ and $I_d$ denotes the $\mathbb K$-vector space $R_d\cap I$.

\begin{prop}\label{prop:HSunmixed}
The Hilbert series of the homogeneous ideal $\sfI(\Fgenh, 1)\subset\mathbb K(\mathfrak a)[X]$ is
$$\HS_{\sfI(\Fgenh, 1)}(t)=\frac{\det(A(t^{D-1}))}{t^{(D-1) \binom{p-1}{2}}}\frac{(1-t^D)^p (1-t^{D-1})^{n-p}}{(1-t)^n},$$
where $A(t)$ is the $(p-1)\times (p-1)$
matrix whose $(i,j)$-entry is $\sum_k \binom{p-i}{k} \binom{n-1-j}{k}
t^k$.
\end{prop}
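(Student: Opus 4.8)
The plan is to compute the Hilbert series of $\sfI(\Fgenh,1)$ by relating it, through the auxiliary ideals $\frakI_{(\ell)}\subset\K(\mathfrak a)[U,X]$ introduced in Notations \ref{not:2}, to the Hilbert series of the determinantal ideal $\mathcal D\subset\K[U]$, whose Hilbert series is classically known (Conca--Herzog \cite{ConHer94}). First I would observe that since $\frakg_1,\dots,\frakg_{p(n-1)}$ are the polynomials $u_{i,j}-\partial\frakf_i^h/\partial x_j$, eliminating the $u_{i,j}$ gives $\frakI_{(pn)}\cap\K(\mathfrak a)[X]=\sfI(\Fgenh,1)$; moreover each such $\frakg_k$, when one uses a suitable (bi)grading where $u_{i,j}$ has degree $D-1$ (the degree of $\partial\frakf_i^h/\partial x_j$) and $x_j$ degree $1$, is homogeneous, and substituting $u_{i,j}\mapsto\partial\frakf_i^h/\partial x_j$ is, up to the standard "reduction modulo a regular sequence of linear-in-$u$ shape" argument, Hilbert-series-neutral once we know those $\frakg_k$ form a regular sequence modulo $\mathcal D$. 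Concretely, I would prove by induction on $\ell$ that $\frakg_1,\dots,\frakg_{pn}$ is a regular sequence on $\K(\mathfrak a)[U,X]/\mathcal D$, or equivalently that each $\frakI_{(\ell)}$ has the expected codimension; the genericity of $\mathfrak a$ is what makes the successive quotients have no unexpected associated primes, and Lemma \ref{lem:homzerodim} (that $\sfI(\Fgenh,1)$ is $0$-dimensional, i.e. $\frakI_{(pn)}$ cuts $\mathcal D$'s variety down to the origin in the $X$-fibre) provides the endpoint of this codimension count.

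Granting the regular-sequence claim, the Hilbert series multiplies in the expected way: passing from $\mathcal D\subset\K[U]$ to $\mathcal D\cdot\K(\mathfrak a)[U,X]$ multiplies $\HS$ by $1/(1-t)^n$ (the $n$ free variables $x_1,\dots,x_n$ — note $x_1$ survives since $\jac(\F,1)$ omits the first column); then quotienting successively by the regular sequence $\frakg_1,\dots,\frakg_{p(n-1)}$ (each $\frakg_k=u_{i,j}-\partial\frakf_i^h/\partial x_j$ homogeneous of degree $D-1$) multiplies by $(1-t^{D-1})^{p(n-1)}$; and quotienting by $\frakg_{p(n-1)+1},\dots,\frakg_{pn}=\frakf_1^h,\dots,\frakf_p^h$ (homogeneous of degree $D$) multiplies by $(1-t^D)^p$. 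This yields
$$\HS_{\frakI_{(pn)}}(t)=\HS_{\mathcal D}(t)\,\frac{(1-t^{D-1})^{p(n-1)}(1-t^D)^p}{(1-t)^n}.$$
But eliminating the $p(n-1)$ variables $u_{i,j}$ is exactly undoing the substitution $\frakg_1,\dots,\frakg_{p(n-1)}$: since these $u$-variables each have degree $D-1$, one checks $\HS_{\sfI(\Fgenh,1)}(t)=\HS_{\frakI_{(pn)}}(t)\cdot(1-t^{D-1})^{-p(n-1)}$, giving
$$\HS_{\sfI(\Fgenh,1)}(t)=\HS_{\mathcal D}(t^{D-1})\cdot\frac{(1-t^D)^p(1-t^{D-1})^{n-p}}{(1-t)^n},$$
where the substitution $t\mapsto t^{D-1}$ in $\HS_{\mathcal D}$ accounts for the weight $D-1$ carried by each $u_{i,j}$. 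It then remains to insert the Giambelli--Herzog--Trung / Conca--Herzog determinantal formula for $\HS_{\mathcal D}$: the ideal of maximal ($p\times p$) minors of a generic $p\times(n-1)$ matrix has Hilbert series of the stated shape $\det(A(t))/t^{\binom{p-1}{2}}$ with $A(t)_{i,j}=\sum_k\binom{p-i}{k}\binom{n-1-j}{k}t^k$, and substituting $t\mapsto t^{D-1}$ produces exactly $\det(A(t^{D-1}))/t^{(D-1)\binom{p-1}{2}}$.

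The main obstacle I expect is the regular-sequence / codimension bookkeeping: one must show that adjoining the $\frakg_k$ to $\mathcal D$ drops dimension by exactly $1$ at every step, and this is where genericity of $\mathfrak a$ and the base results (Lemma \ref{lem:smooth}, Corollary \ref{coro:smoothHom}, Lemma \ref{lem:homzerodim}) do the real work — the delicate point being that $\MaxM(\jac(\Fgenh,1))$ is \emph{not} a generic set of polynomials (it depends on $\mathfrak a$), so one cannot invoke a black-box genericity statement and must instead route everything through the linear-in-$u$ presentation where $\frakg_1,\dots,\frakg_{p(n-1)}$ visibly define a complete intersection transverse to $V(\mathcal D)\times\overline{\K(\mathfrak a)}^n$. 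A secondary technical point is verifying that the grading conventions (assigning $\wdeg(u_{i,j})=D-1$) are consistent with $\LM_\prec$ for a degree ordering $\prec$ so that the elimination step is genuinely Hilbert-series-preserving up to the factor $(1-t^{D-1})^{p(n-1)}$; this is standard but needs to be stated carefully. Once these are in place the computation of $\det(A(t))$ and the final simplification are routine.
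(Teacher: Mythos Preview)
Your plan is essentially the paper's own proof: route through the weighted ring $\K(\mathfrak a)[U,X]$ with $\wdeg(u_{i,j})=D-1$, quote Conca--Herzog (Lemma \ref{lem:concaherzog}) for $\HS_{\mathcal D}$, prove that the $\frakg_\ell$ form a regular sequence on $\K(\mathfrak a)[U,X]/\mathcal D$ (this is Lemma \ref{lem:ndivzero}), and identify the weighted Hilbert series of $\frakI_{(pn)}$ with $\HS_{\sfI(\Fgenh,1)}$ (Lemma \ref{lem:HSeq}).

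However, your Hilbert-series bookkeeping is off. The first displayed formula mixes the \emph{unweighted} $\HS_{\mathcal D}(t)$ with factors $(1-t^{D-1})$ coming from the \emph{weighted} degrees of the $\frakg_k$; and the claim $\HS_{\sfI(\Fgenh,1)}(t)=\HS_{\frakI_{(pn)}}(t)\cdot(1-t^{D-1})^{-p(n-1)}$ is false. Eliminating the $u_{i,j}$ introduces no multiplicative factor: since $u_{i,j}\equiv\partial\frakf_i^h/\partial x_j$ modulo $\frakI_{(pn)}$, the quotients $\K(\mathfrak a)[U,X]/\frakI_{(pn)}$ and $\K(\mathfrak a)[X]/\sfI(\Fgenh,1)$ are isomorphic as graded algebras, so $\wHS_{\frakI_{(pn)}}(t)=\HS_{\sfI(\Fgenh,1)}(t)$ outright. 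The correct chain is: start from $\wHS_{\mathcal D}(t)=\HS_{\mathcal D}(t^{D-1})$, whose denominator is $(1-t^{D-1})^{n(p-1)}$; multiply by $1/(1-t)^n$ for the $X$-variables; multiply by $(1-t^{D-1})^{p(n-1)}(1-t^D)^p$ for the regular sequence; the exponent $p(n-1)-n(p-1)=n-p$ then falls out directly, with no further division. Your two errors do not cancel to produce your stated final formula --- that formula is correct, but it does not follow from your intermediate steps as written.

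For the regular-sequence claim, the operative input is not ``genericity of $\mathfrak a$ prevents unexpected associated primes'' but the Cohen--Macaulayness of $\K[U]/\mathcal D$ (Hochster--Eagon): combined with the endpoint $\dim\frakI_{(pn)}=0$ from Lemma \ref{lem:homzerodim}, the unmixedness theorem then forces each $\frakg_\ell$ to be a nonzerodivisor (this is exactly how the paper argues in Lemma \ref{lem:ndivzero}). You should make that ingredient explicit, since a bare codimension count without Cohen--Macaulayness does not rule out embedded primes.
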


The proof of Proposition \ref{prop:HSunmixed} is postponed to Section \ref{sec:HSproof}.

\begin{proof}[Proof of Theorem \ref{theo:hdreg}]
  By definition, the Hilbert series of a zero-dimensional homogeneous
  ideal is a polynomial of degree $\dreg -1$.  By Lemma
  \ref{lem:homzerodim}, $\sfI(\Fgenh, 1)$ has dimension $0$. Thus,
  using Proposition \ref{prop:HSunmixed}, we deduce that:
  { $$\dreg(\sfI(\Fgenh, 1)) =
    1+\deg\left(\frac{\det(A(t^{D-1}))}{t^{(D-1)
          \binom{p-1}{2}}}\frac{(1-t^D)^p
        (1-t^{D-1})^{n-p}}{(1-t)^n}\right).$$} The highest degree on
  each row of $A(t)$ is reached on the diagonal. Thus $\deg(\det
  A(t))=\frac{p (p-1)}{2}$ and a direct degree computation yields
$$\begin{array}{r@{\,}c@{\,}l}
  \dreg(\sfI(\Fgenh, 1))&=&1+\deg\left(\frac{\det(A(t^{D-1}))}{t^{(D-1) \binom{p-1}{2}}}\frac{(1-t^D)^p (1-t^{D-1})^{n-p}}{(1-t)^n}\right)\\
  &=&D(p-1) + (D-2)n +2.\end{array}$$
\end{proof}
{F}rom Proposition \ref{prop:HSunmixed}, one can also deduce the degree of $\sfI(\Fgenh, 1)$; this provides an alternate proof of \cite[Theorem 2.2]{NieRan09}. 
\begin{coro}\label{cor:deghom}
  The degree of the ideal $\sfI(\Fgenh, 1)$ is $$\DEG(\sfI(\Fgenh, 1))=\binom{n-1}{p-1} D^p  (D-1)^{n-p}.$$
\end{coro}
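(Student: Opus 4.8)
The plan is to obtain $\DEG(\sfI(\Fgenh,1))$ by evaluating the Hilbert series of Proposition~\ref{prop:HSunmixed} at $t=1$. For a homogeneous ideal $I$ in a polynomial ring $R$ over a field, $I$ has dimension $0$ if and only if $R/I$ is a finite-dimensional vector space, and in that case $\HS_I(t)$ is a polynomial with $\HS_I(1)=\dim_{\K(\mathfrak a)}(R/I)=\DEG(I)$. By Lemma~\ref{lem:homzerodim}, $\sfI(\Fgenh,1)$ is $0$-dimensional, so it suffices to compute the value at $t=1$ of the closed form given in Proposition~\ref{prop:HSunmixed}.

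First I would rewrite
$$\frac{(1-t^D)^p(1-t^{D-1})^{n-p}}{(1-t)^n}=\left(\frac{1-t^D}{1-t}\right)^{p}\left(\frac{1-t^{D-1}}{1-t}\right)^{n-p},$$
which is the polynomial $(1+t+\cdots+t^{D-1})^p(1+t+\cdots+t^{D-2})^{n-p}$ and whose value at $t=1$ is $D^p(D-1)^{n-p}$. Since $t^{(D-1)\binom{p-1}{2}}\to1$ and $\det(A(t^{D-1}))\to\det(A(1))$ as $t\to1$, this already gives $\DEG(\sfI(\Fgenh,1))=D^p(D-1)^{n-p}\det(A(1))$, so the whole statement reduces to the identity $\det(A(1))=\binom{n-1}{p-1}$.

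For this last identity I would start from Vandermonde's convolution, $\sum_k\binom{p-i}{k}\binom{n-1-j}{k}=\binom{n-1+p-i-j}{p-i}$, which turns $A(1)$ into the $(p-1)\times(p-1)$ matrix with $(i,j)$ entry $\binom{n-1+p-i-j}{p-i}$, a classical binomial determinant. I would evaluate it by writing this entry as $\frac{1}{(p-i)!}\,\gamma_j(\gamma_j+1)\cdots(\gamma_j+p-i-1)$ with $\gamma_j=n-j$, then pulling the scalar $1/(p-i)!$ out of row $i$ and the common factor $\gamma_j$ out of column $j$; what then remains in row $i$ is the monic polynomial $(\gamma+1)\cdots(\gamma+p-i-1)$ of degree $p-i-1$ (the degrees $0,1,\dots,p-2$ each occurring once), so elementary row operations reduce the residual determinant to a Vandermonde determinant in $\gamma_1,\dots,\gamma_{p-1}$. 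Collecting the row scalars $\prod_i 1/(p-i)!=1/\prod_{m=1}^{p-1}m!$, the column factors $\prod_j\gamma_j=(n-1)!/(n-p)!$ and the value of the Vandermonde determinant (a superfactorial $\prod_{m=1}^{p-2}m!$ up to a sign which cancels, as it must since the answer is a positive integer), everything telescopes to $\frac{1}{(p-1)!}\cdot\frac{(n-1)!}{(n-p)!}=\binom{n-1}{p-1}$. The main obstacle is precisely this determinant evaluation --- the passage to $t=1$ being immediate --- and it could alternatively be carried out by a Lindstr\"om--Gessel--Viennot non-intersecting lattice-path argument, or quoted from the literature on binomial determinant identities.
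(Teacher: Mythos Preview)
Your proof is correct and follows exactly the same route as the paper: evaluate the Hilbert series of Proposition~\ref{prop:HSunmixed} at $t=1$, factor out $D^p(D-1)^{n-p}$, and reduce everything to the identity $\det(A(1))=\binom{n-1}{p-1}$ after applying Vandermonde's convolution to the entries. The only difference is in that last step: the paper simply quotes the Harris--Tu formula (via \cite[Ex.~14.4.14 and A.9.4]{Ful97}) for the resulting binomial determinant, whereas you carry out the evaluation by hand, pulling out the row factors $1/(p-i)!$ and the column factors $\gamma_j=n-j$ and reducing to a Vandermonde determinant. Your argument is sound (the sign from reversing the rows indeed cancels against the sign coming from $\gamma_k-\gamma_j=j-k<0$), and has the merit of being self-contained; the paper's version is shorter but relies on an external reference.
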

\begin{proof}
   By definition of the Hilbert series, the degree of the $0$-dimensional homogeneous ideal $\sfI(\Fgenh, 1)$ is equal to $\HS_{\sfI(\Fgenh, 1)}(1)$.
 By Proposition \ref{prop:HSunmixed}, direct computations show that $\HS_{\sfI(\Fgenh, 1)}(1)=\det(A(1)) D^p(D-1)^{n-p}$.  The determinant of the matrix $A(1)$ can be evaluated by using Vandermonde's identity and a formula due to Harris-Tu (see e.g. \cite[Example 14.4.14, Example A.9.4]{Ful97}).  We deduce that $\det (A(1))=\binom{n-1}{p-1}$ and hence $\HS_{\sfI(\Fgenh, 1)}(1)=\binom{n-1}{p-1} D^p (D-1)^{n-p}$.  \end{proof}

It remains to prove Proposition \ref{prop:HSunmixed}. This is done in the next subsections following several steps: \begin{itemize}
\item provide an explicit form of the Hilbert series of the ideal $\mathcal D$; this is actually already done in \cite{ConHer94}; we recall the statement of this result in Lemma \ref{lem:concaherzog};
  \item deduce from it an explicit form of Hilbert series of the ideal $\frakI_{(pn)}$ using genericity properties satisfied by the polynomials $\frakg_k$ and properties of quasi-homogeneous ideals; this is done respectively in Lemma \ref{lem:ndivzero} and Section \ref{subsec:quasihomogeneous};
  \item deduce from it the Hilbert series associated to $\sfI(\Fgenh, 1)$. 
\end{itemize}

\subsection{Auxiliary results}

\noindent
We start by restating a special case of \cite[Cor. 1]{ConHer94}.  
\begin{lem}[{\cite[Corollary 1]{ConHer94}}]\label{lem:concaherzog}  The Hilbert series of the ideal $\mathcal D\subset\mathbb K[U]$ is 
$$\HS_{\mathcal D}(t)=\frac{\det A(t)}{t^{ \binom{p-1}{2}}
  (1-t)^{n (p-1)}}.$$
\end{lem}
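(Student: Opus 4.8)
The plan is to recognize $\mathcal D$ as the ideal of maximal minors of a \emph{generic} matrix, read off its Hilbert series from the literature, and then cross-check the formula by a self-contained resolution-theoretic computation.

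First I would observe that $\mathcal D\subset\K[U]$ is exactly the ideal of maximal minors of the $p\times(n-1)$ matrix $(u_{i,j})_{1\le i\le p,\,2\le j\le n}$, whose entries are algebraically independent; in the range $p\le n-1$ relevant here these maximal minors are the $p\times p$ minors. So $\mathcal D$ is a generic determinantal ideal of maximal minors, and I would invoke \cite[Cor.~1]{ConHer94} in its maximal-minors form: for a generic $m\times\nu$ matrix $Y$ with $m\le\nu$, the Hilbert series of $\K[Y]/I_m(Y)$ equals
$$\HS_{\K[Y]/I_m(Y)}(t)=\frac{\det B(t)}{t^{\binom{m-1}{2}}(1-t)^{m\nu-(\nu-m+1)}},$$
where $B(t)$ is the $(m-1)\times(m-1)$ matrix with $(i,j)$-entry $\sum_k\binom{m-i}{k}\binom{\nu-j}{k}t^k$ and $\nu-m+1$ is the codimension of $I_m(Y)$. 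Substituting $m=p$, $\nu=n-1$ turns $B(t)$ into the matrix $A(t)$ of the statement, turns $t^{\binom{m-1}{2}}$ into $t^{\binom{p-1}{2}}$, and turns the exponent of $(1-t)$ into $p(n-1)-(n-p)=n(p-1)$; this is precisely the claimed formula. The one point requiring attention is bookkeeping: Conca and Herzog state their result up to a transposition of $Y$ and a relabelling of rows and columns, and their normalizing monomial is written in the form dictated by their reduction argument, so one must check that the substitution really reproduces the $(p-1)\times(p-1)$ size, the index ranges $1\le i,j\le p-1$, and the exponent $\binom{p-1}{2}$ exactly as stated.

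As an independent check, and as a route that does not rely on the external reference, I would compute $\HS_{\mathcal D}$ directly from the minimal free resolution of $\K[U]/\mathcal D$. The ideal of maximal minors of a generic $p\times(n-1)$ matrix is perfect of codimension $n-p$, and its minimal free resolution is the Eagon--Northcott complex: the $0$-th term is $\K[U]$, and the $j$-th term ($1\le j\le n-p$) is free of rank $\binom{n-1}{p+j-1}\binom{p+j-2}{j-1}$, generated in degree $p+j-1$ because the matrix entries are linear forms. Taking the alternating sum gives
$$\HS_{\mathcal D}(t)=\frac{1}{(1-t)^{p(n-1)}}\left(1+\sum_{j=1}^{n-p}(-1)^j\binom{n-1}{p+j-1}\binom{p+j-2}{j-1}t^{p+j-1}\right),$$
so that matching this with the Conca--Herzog expression reduces the lemma to the determinant evaluation
$$\det A(t)=t^{\binom{p-1}{2}}(1-t)^{p-n}\left(1+\sum_{j=1}^{n-p}(-1)^j\binom{n-1}{p+j-1}\binom{p+j-2}{j-1}t^{p+j-1}\right).$$
This last identity is the genuinely technical step, and I expect it --- rather than the specialization, which is routine once the conventions are pinned down --- to be the main obstacle. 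I would establish it combinatorially: either by expanding $\det A(t)$ via the Lindstr\"om--Gessel--Viennot lemma (each entry $\sum_k\binom{m-i}{k}\binom{\nu-j}{k}t^k$ is a lattice-path generating function, so the determinant becomes a signed count of non-intersecting path families that collapses to the displayed binomial sum), or by reducing $A(t)$ to triangular form through elementary column operations and reading off the product of the diagonal entries.
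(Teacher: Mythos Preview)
Your proposal is correct. The paper does not give any proof of this lemma at all: it is stated purely as a restatement of \cite[Corollary~1]{ConHer94}, so your first paragraph---recognizing $\mathcal D$ as the maximal-minor ideal of a generic $p\times(n-1)$ matrix and specializing the Conca--Herzog formula with $m=p$, $\nu=n-1$---is exactly the paper's approach, only spelled out in more detail than the paper bothers with.

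Your second half, the independent verification via the Eagon--Northcott complex and the resulting determinant identity, goes well beyond what the paper does. It is a legitimate alternative route (and the paper even mentions in its conclusions that the Eagon--Northcott complex is a tool it intends to use in future work), but note that for the purposes of this lemma it is unnecessary: once you have pinned down the specialization correctly, the citation stands on its own, and the determinant evaluation you flag as the ``main obstacle'' is precisely what Conca and Herzog have already done for you.
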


\begin{lem}\label{lem:ndivzero}
For each $2\leq\ell\leq n p$, $\frakg_\ell$ does not divide $0$ in $\mathbb K(\mathfrak a)[U,X]/\frakI_{(\ell-1)}$.
\end{lem}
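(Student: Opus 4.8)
The claim is equivalent to saying that $\frakg_1,\ldots,\frakg_{pn}$ is a regular sequence on the quotient $A:=\mathbb K(\mathfrak a)[U,X]/\mathcal D$, i.e. that $\frakg_\ell$ is a nonzerodivisor modulo $\frakI_{(\ell-1)}=\mathcal D+\langle\frakg_1,\ldots,\frakg_{\ell-1}\rangle$ for every $\ell$. The plan is to exhibit $\frakg_1,\ldots,\frakg_{pn}$ as a \emph{homogeneous system of parameters} of $A$ for a suitable weighted grading, and then to invoke the classical fact that in a Cohen--Macaulay graded ring any homogeneous system of parameters is a regular sequence. This splits the work into three checks: (i) $A$ is graded with all $\frakg_\ell$ homogeneous of positive degree; (ii) $A$ is Cohen--Macaulay of Krull dimension $pn$; (iii) the $\frakg_\ell$ cut $A$ down to dimension $0$.

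For (i), I would grade $\mathbb K(\mathfrak a)[U,X]$ by $\wdeg(x_i)=1$ and $\wdeg(u_{i,j})=D-1$ (assuming $D\ge 2$; if $D=1$ then $\jac(\Fgenh,1)$ is a generic constant matrix, so $\sfI(\Fgenh,1)=\langle 1\rangle$ and there is nothing to prove). Then every maximal minor of $(u_{i,j})$ is homogeneous of weighted degree $p(D-1)$; each $\frakg_\ell=u_{i,j}-\frac{\partial\frakf_i^h}{\partial x_j}$ with $\ell\le p(n-1)$ is homogeneous of weighted degree $D-1$ (both terms, since $\frakf_i^h$ is homogeneous of degree $D$ in $X$); and each $\frakg_\ell=\frakf_i^h$ with $\ell>p(n-1)$ is homogeneous of degree $D$. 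Hence $\mathcal D$ is a weighted-homogeneous ideal, $A$ is a Noetherian $\mathbb N$-graded ring with $A_0=\mathbb K(\mathfrak a)$ a field, and all $\frakg_\ell$ are homogeneous of positive degree.

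For (ii), I would use that $\mathbb K[U]/\mathcal D$ is the coordinate ring of a generic determinantal variety, hence (classically) a Cohen--Macaulay domain of codimension $n-p$ in $\mathbb K[U]$, so of dimension $p(n-1)-(n-p)=n(p-1)$; this is also consistent with Lemma~\ref{lem:concaherzog}, whose Hilbert series has a pole of order $n(p-1)$ at $t=1$. Cohen--Macaulayness and codimension are preserved by polynomial extension over $\mathbb K[X]$ and by the base-field extension from $\mathbb K$ to $\mathbb K(\mathfrak a)$, so $A$ is Cohen--Macaulay with $\dim A=p(n-1)+n-(n-p)=pn$. For (iii), the substitution $u_{i,j}\mapsto\frac{\partial\frakf_i^h}{\partial x_j}$ identifies $\mathbb K(\mathfrak a)[U,X]/\frakI_{(pn)}$ with $\mathbb K(\mathfrak a)[X]/\sfI(\Fgenh,1)$ (the polynomials $\frakg_1,\ldots,\frakg_{p(n-1)}$ realise this substitution, under which $\mathcal D$ becomes $\langle\MaxM(\jac(\Fgenh,1))\rangle$ and $\frakg_{p(n-1)+1},\ldots,\frakg_{pn}$ become $\frakf_1^h,\ldots,\frakf_p^h$), and Lemma~\ref{lem:homzerodim} says this ring has dimension $0$. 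Thus the $pn$ homogeneous elements $\frakg_1,\ldots,\frakg_{pn}$ generate an ideal of $A$ with $0$-dimensional quotient, so they form a homogeneous system of parameters; since $A$ is Cohen--Macaulay they are therefore a regular sequence, which is the desired conclusion (for all $1\le\ell\le pn$, a fortiori for $2\le\ell\le pn$).

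The two inputs that carry real content are Lemma~\ref{lem:homzerodim}, where the genericity of $\Fgenh$ is used to guarantee that the $\frakg_\ell$ really cut the dimension all the way down to $0$, and the Cohen--Macaulayness of the generic determinantal ring. The main obstacle I anticipate is the commutative-algebra bookkeeping: tracking that Cohen--Macaulayness and the Krull dimension survive both base changes, and justifying the ``system of parameters $\Rightarrow$ regular sequence'' step --- which is exactly why the weighted grading is worth introducing, since $A$ is neither local nor standard-graded, and without it one would instead have to argue by induction on $\ell$, controlling the dimensions and associated primes of the rings $\mathbb K(\mathfrak a)[U,X]/\frakI_{(\ell)}$ directly.
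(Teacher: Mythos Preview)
Your proposal is correct and follows essentially the same strategy as the paper: both arguments hinge on (a) the Cohen--Macaulayness of $A=\mathbb K(\mathfrak a)[U,X]/\mathcal D$ with $\dim A=pn$ (via Hochster--Eagon), and (b) the zero-dimensionality of $A/\langle\frakg_1,\ldots,\frakg_{pn}\rangle$ (via Lemma~\ref{lem:homzerodim}). The only packaging difference is that you introduce the weighted grading here and quote the standard fact ``homogeneous system of parameters in a Cohen--Macaulay graded ring is a regular sequence'', whereas the paper defers the grading to the Hilbert-series computation and instead unpacks that fact by contradiction using the unmixedness theorem \cite[Corollary 18.14]{Eis95}.
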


\begin{proof}
  According to \cite[Thm. 2]{HocEag70}\cite{HocEag71}, the ring $\mathbb K(\mathfrak
  a)[U]/\mathcal D$ is a Cohen-Macaulay domain of Krull dimension
  $(n-1+p-(p-1))(p-1) = n (p-1)$.  Therefore, the ring $\mathbb
  K(\mathfrak a)[U,X]/\mathcal D$ is also a Cohen-Macaulay domain, and
  has dimension $n p$.

  Consider now the ideal $\langle \frakg_1,\ldots, \frakg_{n
    p}\rangle\subset(\mathbb K(\mathfrak a)[U]/\mathcal
  D)[X]$. According to Lemma \ref{lem:homzerodim}, the ideal
  $\sfI(\Fgenh, 1)=(\mathcal D + \langle \frakg_1,\ldots, \frakg_{n
    (p-1)}\rangle)\cap \mathbb K(\mathfrak a)[X]$ is
  zero-dimensional. Let $\prec$ denote a lexicographical monomial
  ordering such that for all $i,j,k$, $u_{i,j}\succ x_k$. Since the
  variables $U$ can be expressed as functions of $X$
  ($u_{i,j}-\frac{\partial f_i}{\partial x_j}\in \frakI_{(pn)}$), we have
  $\LM_\prec(\mathcal D + \langle \frakg_1,\ldots, \frakg_{n
    p}\rangle)=\langle u_{i,j}\rangle + \LM_\prec(\sfI(\Fgenh, 1))$ which is
  zero-dimensional. Therefore, the ideal $\mathcal D + \langle
  \frakg_1,\ldots, \frakg_{n p}\rangle\subset \mathbb K(\mathfrak
  a)[U,X]$ is zero-dimensional and hence so is $\langle
  \frakg_1,\ldots, \frakg_{n p}\rangle\subset\mathbb K(\mathfrak
  a)[U,X]/\mathcal D$.  Now suppose by contradiction that there exists
  $\ell$ such that $\frakg_\ell$ divides $0$ in $\mathbb K(\mathfrak a)[U,X]/\frakI_{(\ell-1)}$. Let
  $\ell_0$ be the smallest integer satisfying this property. Since
  $\mathcal D$ is equidimensional and $\forall \ell<\ell_0,
  \frakg_\ell$ does not divide $0$ in $\mathbb K(\mathfrak a)[U,X]/\frakI_{(\ell-1)}$, the ideal
  $\langle \frakg_1, \ldots, \frakg_{\ell_0-1}\rangle\subset
  \mathbb K(\mathfrak a)[U,X]/\mathcal D$ is equidimensional, has codimension $\ell_0-1$, and
  thus has no embedded components by the unmixedness Theorem
  \cite[Corollary 18.14]{Eis95}. Since $\frakg_{\ell_0}$ divides $0$
  in the ring $\mathbb K(\mathfrak a)[U,X]/(\mathcal D+\langle \frakg_1,\ldots,
  \frakg_{\ell_0-1}\rangle)$, the ideal $\langle \frakg_1, \ldots,
  \frakg_{\ell_0}\rangle\subset \mathbb K(\mathfrak a)[U,X]/\mathcal D$ has also codimension
  $\ell_0-1$. Therefore the codimension of $\langle \frakg_1, \ldots,
  \frakg_{n p}\rangle\subset \mathbb K(\mathfrak a)[U,X]/\mathcal D$ is strictly less than $n
  p$, which leads to a contradiction since we have proved that the
  dimension of this ideal is $0$. 
\end{proof}

\subsection{Quasi-homogeneous polynomials}\label{subsec:quasihomogeneous}  
The degrees in the matrix whose entries are the variables $u_{i,j}$ have to be balanced with $D-1$, the degree of the partial derivatives. This is done by changing the gradation by putting a \emph{weight} on the variables $u_{i,j}$, giving rise to \emph{quasi-homogeneous} polynomials. This approach has been used in \cite{FauSafSpa11a} in the context of the Generalized MinRank Problem. A polynomial $f\in \mathbb K[U,X]$ is said to be \emph{quasi-homogeneous} if the following condition is satisfied (see e.g. \cite[Definition 2.11, page 120]{GreLosShu07}):
$$f(\lambda^{D-1} u_{1,2},\ldots,\lambda^{D-1} u_{p,n},\lambda x_1,\ldots, \lambda x_k)=\lambda^d f(u_{1,2},\ldots,u_{p,n},x_1,\ldots, x_k).$$
The integer $d$ is called the weight degree of $f$ and denoted by $\wdeg(f)$.

An ideal $I\subset\mathbb K[U,X]$ is called
\emph{quasi-homogeneous} if there exists a set of quasi-homogeneous
generators of $I$. We let $\mathbb K[U,X]^{(w)}_d$
denote the $\mathbb K$-vector space of quasi-homogeneous
polynomials of weight degree $d$, and $I^{(w)}_d$ denote the set $\mathbb
K[U,X]^{(w)}_d\cap I$.  Ideals generated by quasi-homogeneous
polynomials are positively graded, as shown in \cite[Proposition 1]{FauSafSpa11a} that we restate below.

\begin{prop}[{\cite[Proposition 1]{FauSafSpa11a}}]
Let $I\subset\mathbb K[U,X]$ be an ideal.  Then the following statements are equivalent:
\begin{itemize}
\item there exists a set of quasi-homogeneous generators of $I$;
\item the sets $I^{(w)}_d$ are vector subspaces of $\mathbb K[U,X]^{(w)}_d$, and
$I=\bigoplus_{d\in \mathbb N} I^{(w)}_d$. 
\end{itemize}
\end{prop}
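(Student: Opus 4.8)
The plan is to recognise this as the familiar statement that, in an $\mathbb N$-graded ring, an ideal coincides with the direct sum of its graded components precisely when it has a generating set of homogeneous elements --- here applied to the grading of $\mathbb K[U,X]$ in which each $u_{i,j}$ has weight $D-1$ and each $x_k$ has weight $1$. First I would record that this weight assignment makes $\mathbb K[U,X]=\bigoplus_{d\in\mathbb N}\mathbb K[U,X]^{(w)}_d$ into a graded ring: a product of two quasi-homogeneous polynomials is again quasi-homogeneous and the weight degrees add (this is clear on monomials and extends by bilinearity). Consequently every $f\in\mathbb K[U,X]$ has a unique expansion $f=\sum_d f_d$ into quasi-homogeneous components $f_d\in\mathbb K[U,X]^{(w)}_d$, an operation which is $\mathbb K$-linear in $f$ and interacts well with multiplication.

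The implication from the second statement to the first is immediate: if $I=\bigoplus_d I^{(w)}_d$, then choosing a $\mathbb K$-basis of each subspace $I^{(w)}_d$ and taking the union over all $d$ produces a generating set of $I$ consisting of quasi-homogeneous polynomials.

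For the converse, I would fix quasi-homogeneous generators $g_1,\dots,g_s$ of $I$. That each $I^{(w)}_d=I\cap\mathbb K[U,X]^{(w)}_d$ is a $\mathbb K$-subspace of $\mathbb K[U,X]^{(w)}_d$ is automatic, since it is the intersection of a subspace with the additive, $\mathbb K$-scalar-stable subgroup $I$. The substantive point is that $I$ is stable under taking quasi-homogeneous components, which then gives $I\subseteq\sum_d I^{(w)}_d$: writing an arbitrary $f\in I$ as $f=\sum_{i=1}^s h_i g_i$, expanding $h_i=\sum_e h_{i,e}$ into quasi-homogeneous components with the convention $h_{i,e}=0$ for $e<0$, one has that $h_{i,e}g_i$ is quasi-homogeneous of weight degree $e+\wdeg(g_i)$; collecting the contributions of weight degree exactly $d$ therefore yields $f_d=\sum_{i=1}^s h_{i,\,d-\wdeg(g_i)}\,g_i\in I$, hence $f_d\in I^{(w)}_d$ for every $d$. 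Thus $f=\sum_d f_d\in\sum_d I^{(w)}_d$, and this sum is direct because it sits inside the ambient direct sum $\bigoplus_d\mathbb K[U,X]^{(w)}_d$; hence $I=\bigoplus_d I^{(w)}_d$.

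I do not expect a genuine obstacle in this argument; the single place that deserves a line of care is the verification that the weight grading is compatible with the ring multiplication, since this is exactly what makes the step ``collect the terms of weight degree $d$ in $\sum_i h_i g_i$'' legitimate. As the result is already established in \cite{FauSafSpa11a}, I would in the paper simply cite it, and at most reproduce the short component-collecting computation above for self-containedness.
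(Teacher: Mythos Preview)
Your argument is correct and is the standard proof that an ideal in an $\mathbb N$-graded ring is graded if and only if it admits homogeneous generators; there is no gap. The paper itself does not prove this proposition at all but simply cites \cite[Proposition 1]{FauSafSpa11a}, so your final remark---to just cite the result and optionally include the short component-collecting computation---is exactly what the paper does.
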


If $I$ is a quasi-homogeneous ideal, then $\mathbb K[U,X]/I$ is a graded algebra and hence its weighted Hilbert series
$\wHS_I(t)\in \mathbb Z[[t]]$ is well defined:
$$\wHS_I(t)=\sum_{d\in \mathbb N} \dim_{\mathbb K}(\mathbb K[U,X]^{(w)}_d/I^{(w)}_d) t^d.$$

The following lemma and its proof are similar to \cite[Lemma 5]{FauSafSpa11a}.

\begin{lem} \label{lem:HSeq}
  The Hilbert series of $\sfI(\Fgenh, 1)\subset\mathbb K(\mathfrak
  a)[X]$ is equal to the weighted Hilbert series of $\frakI_{(pn)}\subset \mathbb
  K(\mathfrak a)[X,U]$.
\end{lem}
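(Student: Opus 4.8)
The plan is to establish a weight-preserving correspondence between the graded pieces of $\sfI(\Fgenh,1)$ and those of $\frakI_{(pn)}$, using the fact that the generators $\frakg_1,\ldots,\frakg_{p(n-1)}$ (namely $u_{i,j}-\partial\frakf_i^h/\partial x_j$) allow us to eliminate the $U$-variables. The crucial observation is that each $\frakg_k$ for $1\le k\le p(n-1)$ is quasi-homogeneous of weight degree $D-1$ with respect to the weighting $\wdeg(u_{i,j})=D-1$, $\wdeg(x_j)=1$, since $u_{i,j}$ has weight $D-1$ and $\partial\frakf_i^h/\partial x_j$ is an honest homogeneous polynomial of degree $D-1$ in the $x$'s; similarly $\frakf_1^h,\ldots,\frakf_p^h$ are quasi-homogeneous of weight degree $D$. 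Hence $\frakI_{(pn)}$ is a quasi-homogeneous ideal and $\wHS_{\frakI_{(pn)}}(t)$ is well-defined by the proposition of \cite{FauSafSpa11a} restated above.

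\medskip

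\textbf{Key steps.} First I would set up the elimination map. Consider the $\mathbb K(\mathfrak a)$-algebra homomorphism $\sigma:\mathbb K(\mathfrak a)[U,X]\to\mathbb K(\mathfrak a)[X]$ sending $u_{i,j}\mapsto\partial\frakf_i^h/\partial x_j$ and fixing the $x_j$. This is a surjection, it is weight-homogeneous of degree $0$ (it sends a quasi-homogeneous element of weight degree $d$ to a homogeneous element of degree $d$), and its kernel is exactly $\langle\frakg_1,\ldots,\frakg_{p(n-1)}\rangle$, i.e. the ideal generated by $u_{i,j}-\partial\frakf_i^h/\partial x_j$. Since $\sigma(\frakf_i^h)=\frakf_i^h$ and $\sigma$ maps the maximal minors of the matrix $(u_{i,j})$ to the maximal minors of $\jac(\Fgenh,1)$, we get $\sigma(\frakI_{(pn)})=\langle\Fgenh\rangle+\langle\MaxM(\jac(\Fgenh,1))\rangle=\sfI(\Fgenh,1)$, and moreover $\sigma^{-1}(\sfI(\Fgenh,1))=\frakI_{(pn)}$ because $\frakI_{(pn)}$ already contains $\ker\sigma$. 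Therefore $\sigma$ induces a graded isomorphism of $\mathbb K(\mathfrak a)$-algebras
$$\mathbb K(\mathfrak a)[U,X]/\frakI_{(pn)}\;\xrightarrow{\ \sim\ }\;\mathbb K(\mathfrak a)[X]/\sfI(\Fgenh,1),$$
where the left side carries the weighted grading and the right side the standard grading. Matching graded pieces degree by degree gives $\dim_{\mathbb K(\mathfrak a)}\big((\mathbb K(\mathfrak a)[U,X]/\frakI_{(pn)})^{(w)}_d\big)=\dim_{\mathbb K(\mathfrak a)}\big((\mathbb K(\mathfrak a)[X]/\sfI(\Fgenh,1))_d\big)$ for all $d$, whence $\wHS_{\frakI_{(pn)}}(t)=\HS_{\sfI(\Fgenh,1)}(t)$. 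Alternatively, and perhaps more in the spirit of \cite[Lemma 5]{FauSafSpa11a}, I would exhibit for each $d$ an explicit $\mathbb K(\mathfrak a)$-linear bijection between a basis of $\mathbb K(\mathfrak a)[X]^{(w)}_d/\sfI(\Fgenh,1)^{(w)}_d$ and one of $\mathbb K(\mathfrak a)[U,X]^{(w)}_d/\frakI_{(pn)\,}^{(w)}_d$, using normal-form reduction modulo the $\frakg_k$ for $1\le k\le p(n-1)$ with respect to the lexicographic order with $u_{i,j}\succ x_k$ (as already used in the proof of Lemma \ref{lem:ndivzero}): every quasi-homogeneous polynomial in $\mathbb K(\mathfrak a)[U,X]$ reduces modulo these $p(n-1)$ generators to a unique weight-homogeneous polynomial in $\mathbb K(\mathfrak a)[X]$ of the same weight degree, and this reduction descends to the quotients.

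\medskip

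\textbf{Main obstacle.} The one point that needs genuine care is checking that the grading is respected throughout — that the weight degree is indeed preserved by the elimination and that $\sfI(\Fgenh,1)^{(w)}_d$ corresponds precisely to $\frakI_{(pn)}^{(w)}_d$, not just set-theoretically but as vector spaces of the right dimension. Concretely, one must verify that $\frakI_{(pn)}$ is stable under the weighted grading (so that $\frakI_{(pn)}=\bigoplus_d\frakI_{(pn)\,}^{(w)}_d$), which follows from the quasi-homogeneity of all its listed generators $\frakg_1,\ldots,\frakg_{pn}$; and that $\sigma$ restricted to weight degree $d$ is surjective onto $\mathbb K(\mathfrak a)[X]_d$ with kernel exactly $(\ker\sigma)^{(w)}_d$. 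Both are routine once the quasi-homogeneity of the $\frakg_k$ is pinned down, but they are the substance of the lemma; the degree-by-degree dimension count then yields the claimed equality of series. Having established this lemma, Proposition \ref{prop:HSunmixed} will follow by computing $\wHS_{\frakI_{(pn)}}(t)$ from Lemma \ref{lem:concaherzog} and Lemma \ref{lem:ndivzero} together with the standard formula for how the weighted Hilbert series changes when one quotients successively by the non-zero-divisors $\frakg_\ell$.
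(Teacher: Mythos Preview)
Your proof is correct. The primary route you take differs from the paper's, though you sketch the paper's route as your alternative.

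The paper argues via initial ideals: it fixes a lex order with $u_{i,j}\succ x_k$, invokes the general fact $\HS_I=\HS_{\LM_\prec(I)}$ (resp.\ $\wHS_I=\wHS_{\LM_\prec(I)}$), and then shows
\[
\LM_{\prec_{\mathsf{lex}}}(\frakI_{(pn)})=\left\langle\{u_{i,j}\}\cup \LM_{\prec_{\mathsf{lex}}}(\sfI(\Fgenh,1))\right\rangle,
\]
so that the two quotient rings have isomorphic graded pieces. This is essentially the normal-form reduction you describe in your alternative.

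Your main argument is more direct: the substitution map $\sigma\colon u_{i,j}\mapsto\partial\frakf_i^h/\partial x_j$ is a surjective graded $\mathbb K(\mathfrak a)$-algebra map (weight degree to ordinary degree), its kernel is exactly $\langle\frakg_1,\ldots,\frakg_{p(n-1)}\rangle\subset\frakI_{(pn)}$, and it carries $\frakI_{(pn)}$ onto $\sfI(\Fgenh,1)$; hence it induces a graded isomorphism of quotients, and the series coincide. This avoids any appeal to monomial orderings or initial ideals and makes the graded-isomorphism content explicit. The paper's approach, by contrast, stays closer to the Gr\"obner-basis machinery used elsewhere in the article and matches the cited \cite[Lemma~5]{FauSafSpa11a} verbatim. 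Both arguments are short; yours is slightly more self-contained, while the paper's fits its ambient toolkit.
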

\begin{proof}
  Let $\prec_{\mathsf{lex}}$ be a lex ordering on the variables of the polynomial ring $\mathbb
  K(\mathfrak a)[X,U]$ such that $x_k\prec_{\mathsf{lex}} u_{i,j}$ for all
  $k,i,j$. By \cite[Sec. 6.3, Prop. 9]{CoxLitShe97},
  $\HS_{\sfI(\Fgenh, 1)}(t)=\HS_{\LM_{\prec_{\mathsf{lex}}}(\sfI(\Fgenh, 1))}(t)$ and
  $\wHS_{\frakI_{(p(n-1))}}(t)=\wHS_{\LM_{\prec_{\mathsf{lex}}}(\frakI_{(p(n-1))})}(t)$.
  Since $\LM_{\prec_{\mathsf{lex}}}(u_{i,j}-f_{i,j})=u_{i,j}$ and $\frakI_{(pn)}\cap \mathbb K[X] = \sfI(\Fgenh, 1)$, we deduce that
  $$\begin{array}{rcl}\LM_{\prec_{\mathsf{lex}}}(\frakI_{(pn)})&=&\left\langle \{u_{i,j}\}\cup \LM_{\prec_{\mathsf{lex}}}(\frakI_{(pn)}\cap \mathbb K(\mathfrak a)[X])\right\rangle\\
    &=&\left\langle \{u_{i,j}\}\cup \LM_{\prec_{\mathsf{lex}}}(\sfI(\Fgenh, 1))\right\rangle.\end{array}$$
  Therefore, $\frac{\mathbb K(\mathfrak a)[U,X]}{\LM_{\prec_{\mathsf{lex}}}(\frakI_{(pn)})}$ is isomorphic (as a graded $\mathbb K(\mathfrak a)$-algebra) to $\frac{\mathbb K(\mathfrak a)[X]}{\LM_{\prec_{\mathsf{lex}}}(\sfI(\Fgenh, 1))}$. 

Thus, $\HS_{\LM_{\prec_{\mathsf{lex}}}(\sfI(\Fgenh, 1))}(t)=\wHS_{\LM_{\prec_{\mathsf{lex}}}(\frakI_{(pn)})}(t)$, and hence
  $\HS_{\sfI(\Fgenh, 1)}(t)=\wHS_{\frakI_{(pn)}}(t).$
\end{proof}

\subsection{Proof of Proposition \ref{prop:HSunmixed}}
\label{sec:HSproof}
We reuse Notations \ref{not:2}: $\sfI(\Fgenh, 1)= (\mathcal D +
\langle \frakg_1,\ldots, \frakg_{pn}\rangle)\cap \mathbb K(\mathfrak a)[X]$.  According to
Lemma \ref{lem:concaherzog} and by putting a weight $D-1$ on the variables $U$, the weighted Hilbert series of $\mathcal D\subset \mathbb
K(\mathfrak a)[U]$ is
$$\wHS_{\mathcal D\subset\mathbb
K(\mathfrak a)[U]}(t)=\frac{\det A(t^{D-1})}{t^{(D-1) \binom{p-1}{2}}
  (1-t^{D-1})^{n (p-1)}}.$$ 

Considering $\mathcal D$ as an ideal of
$\mathbb K(\mathfrak a)[X,U]$, we obtain
$$\wHS_{\mathcal D\subset\mathbb
K(\mathfrak a)[U,X]}(t)=\frac{1}{(1-t)^n}\wHS_{\mathcal D\subset\mathbb
K(\mathfrak a)[U]}(t).$$ 

If $I\subset \mathbb K(\mathfrak a)[U,X]$ is a quasi-homogeneous ideal
and if $g$ is a quasi-homogeneous polynomial of weight degree $d$
which does not divide $0$ in the quotient ring $\mathbb K(\mathfrak
a)[U,X]/I$, then the Hilbert series of the ideal $I+\langle g \rangle$
is equal to $(1-t^d)$ multiplied by the Hilbert series of $I$ (see e.g. the proof of \cite[Thm 1]{FauSafSpa11a} for more details).

Notice that the polynomials $\frakg_1,\ldots,\frakg_{p (n-1)}$ are quasi-homoge\-neous of weight degree $D-1$ (these
polynomials have the form $u_{i,j}-\frac{\partial \mathfrak f_i}{\partial x_j}$)
and the polynomials $\frakg_{p(n-1)+1},\ldots, \frakg_{pn}$ are quasi-homoge\-neous of weight degree $D$
(these polynomials are $\mathfrak f_1,\ldots, \mathfrak f_p$). Since $\frakg_\ell$ does not
divide $0$ in $\mathbb K(\mathfrak a)[U,X]/\frakI_{(\ell-1)}$ (Lemma \ref{lem:ndivzero}), the Hilbert series of
the ideal $\frakI_{(pn)}\subset
\mathbb K(\mathfrak a)[X,U]$ is
$$\begin{array}{r@{~}c@{~}l}
\wHS_{\frakI_{(pn)}}(t)&=&\displaystyle\frac{\det A(t^{D-1})(1-t^D)^p (1-t^{D-1})^{p (n-1)}}{t^{(D-1) \binom{p-1}{2}}
  (1-t^{D-1})^{n (p-1)} (1-t)^n} \\
&=&\displaystyle{\frac{\det A(t^{D-1})}{t^{(D-1) \binom{p-1}{2}}
 } \frac{(1-t^D)^p (1-t^{D-1})^{n-p}}{(1-t)^n}}.
\end{array}$$
Finally, by Lemma \ref{lem:HSeq}, $\wHS_{\frakI_{(pn)}}(t)=\HS_{\sfI(\Fgenh, 1)}(t)$.

 \section{The affine case}
\label{sec:affine}

The degree of regularity of a polynomial system is the highest degree
reached during the computation of a Gr\"obner basis with respect to
the grevlex ordering with the $F_5$ algorithm. Therefore, it is a
crucial indicator of the complexity of the Gr\"obner basis
computation.  On the other hand, the complexity of the FGLM algorithm
depends on the degree of the ideal $\sfI(\F,1)$ since this value is
equal to  $\dim_{\mathbb K}\left(\mathbb
  K[X]/\sfI(\F,1)\right)$.

In this section, we show that the bounds on the degree and the degree
of regularity of the ideal $\sfI(\Fgenh,1)$ are also valid for (not
necessarily homogeneous) polynomial families in $\mathbb K[X]$ under
genericity assumptions.

\begin{theo}\label{theo:dregaff}
  There exists a non-empty Zariski open subset $\mathscr O\subset
  \overline{\K}[X]^p_D$ such that, for any $\F$ in $\mathscr
  O\cap \K[X]^p$, 
$$\begin{array}{rcl}\dreg(\sfI(\F,1)) &\leq& D (p-1) + (D-2)n +2,\\
\DEG(\sfI(\F,1))&\leq&\binom{n-1}{p-1} D^p  (D-1)^{n-p}.\end{array}$$
\end{theo}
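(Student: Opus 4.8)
The plan is to transfer the results obtained for the generic system $\Fgenh$ (with coefficients in the rational function field $\K(\mathfrak a)$) to specialized systems $\F$ with coefficients in $\overline{\K}$, by a standard semicontinuity / specialization argument. First I would observe that the key invariants involved --- the Hilbert series of $\sfI(\Fgena,1)$ in a given degree, equivalently the ranks of the relevant Macaulay-type matrices, and the dimension and degree of $\sfI(\F,1)$ --- are governed by the vanishing or non-vanishing of certain minors whose entries are polynomials in the parameters $\mathfrak a$. Since these quantities take their extreme (generic) values over the generic point $\Fgena$, each such non-vanishing defines a non-empty Zariski open subset of $\overline{\K}[X]_D^p$, and $\mathscr O$ will be the finite intersection of these open sets; it is non-empty because it contains $\Fgena$ in the scheme-theoretic sense (equivalently, because $\overline{\K}[X]_D^p$ is irreducible and each condition is satisfied generically).

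The concrete steps are as follows. (i) Work with the \emph{homogenized} picture: for $\F\in\overline\K[X]_D^p$ let $\F^h$ denote the top-degree components; then $\sfI(\F^h,1)$ is a homogeneous ideal and I would first show that for $\F$ in a suitable open set, $\dim(\sfI(\F^h,1))=0$ and its Hilbert series coincides, coefficient by coefficient up to the relevant degree, with that of $\sfI(\Fgenh,1)$ given by Proposition \ref{prop:HSunmixed}. This is again a rank condition on Macaulay matrices whose entries are polynomials in $\mathfrak a$: the rank is maximal at the generic point, hence maximal on a dense open set, and at those points the Hilbert function is $\leq$ the generic one; since the alternating sum / series is a combinatorial invariant forced by $0$-dimensionality one gets equality, whence $\dreg(\sfI(\F^h,1))\leq D(p-1)+(D-2)n+2$ and $\DEG(\sfI(\F^h,1))\leq \binom{n-1}{p-1}D^p(D-1)^{n-p}$ by Theorem \ref{theo:hdreg} and Corollary \ref{cor:deghom}. (ii) Then relate the affine ideal $\sfI(\F,1)$ to the homogeneous one $\sfI(\F^h,1)$: the leading forms of the generators of $\sfI(\F,1)$ generate an ideal containing $\sfI(\F^h,1)$ (the leading form of $f_i$ is $f_i^h$, and the leading form of a maximal minor of $\jac(\F,1)$ divides, or rather has as a factor coming from, the corresponding maximal minor of $\jac(\F^h,1)$ --- this needs a small verification). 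Hence $\LM_{\mathrm{grevlex}}(\sfI(\F,1)) \supseteq \LM_{\mathrm{grevlex}}(\mathrm{in}(\sfI(\F,1))) \supseteq \sfI(\F^h,1)$ in the appropriate sense, so the Hilbert function of $\K[X]/\sfI(\F,1)$ is bounded above by that of $\K[X]/\sfI(\F^h,1)$, giving both $\DEG(\sfI(\F,1))\leq \DEG(\sfI(\F^h,1))$ and the stabilization of the grevlex Gr\"obner basis computation in degree $\leq \dreg$. (iii) Finally I would invoke the standard fact that for an affine ideal whose ideal of leading forms is $0$-dimensional, the $F_5$/Macaulay-matrix computation of a grevlex Gr\"obner basis terminates at the degree of regularity of that homogeneous ideal of leading forms, yielding the bound on $\dreg(\sfI(\F,1))$.

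The main obstacle, and the step I would spend the most care on, is step (ii): controlling the leading forms of the maximal minors of $\jac(\F,1)$ in terms of the maximal minors of $\jac(\F^h,1)$, and making precise the inclusion of homogeneous ideals that underlies the semicontinuity comparison. One has to be careful because a maximal minor of $\jac(\F,1)$ is a polynomial combination of products of partial derivatives $\partial f_i/\partial x_j$, each of which has top-degree part $\partial f_i^h/\partial x_j$, but cancellation among the products could a priori lower the degree of the minor below the expected $(D-1)\cdot(\text{size})$; the argument must show that generically no such extra cancellation occurs, or alternatively that the (possibly lower-degree) leading forms still generate an ideal containing the wanted homogeneous determinantal ideal. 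Once this is pinned down, everything else is routine specialization of the Hilbert-series and dimension statements already established for $\Fgenh$, combined with the standard relation between the complexity of $F_5$ on an affine system and the degree of regularity of its homogenization.
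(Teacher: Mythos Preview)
Your plan is essentially the paper's proof: it packages your step (i) as Lemma~\ref{lem:LMgener} (equality of leading-monomial ideals, hence of Hilbert series, dimension, degree, and $\dreg$, for $\sfI(\F^h,1)$ on a dense open set), your step (ii) as Lemma~\ref{lem:deghomaff}, and your step (iii) by citing the standard bound on the highest $F_5$ degree by the $\dreg$ of the ideal generated by the top-degree parts of the input, once that ideal is $0$-dimensional.

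The obstacle you flag in step (ii) dissolves once you observe that the containment $\langle G^h\rangle \supseteq \sfI(\F^h,1)$ is automatic and needs no extra genericity. Writing $G=\F,\MaxM(\jac(\F,1))$ and $G^h$ for the top-degree parts, expand a maximal minor $M$ of $\jac(\F,1)$ as a signed sum of products of the $\partial f_i/\partial x_j$: this gives $M = M' + (\text{lower degree})$, where $M'$ is the corresponding minor of $\jac(\F^h,1)$. Hence the top-degree part of $M$ \emph{equals} $M'$ whenever $M'\neq 0$ (your ``divides'' has the direction backward), and if $M'=0$ it lies in every ideal anyway. Either way every generator of $\sfI(\F^h,1)$ lies in $\langle G^h\rangle$, so $\langle G^h\rangle\supseteq \sfI(\F^h,1)$ unconditionally; once $\sfI(\F^h,1)$ is $0$-dimensional by step (i), so is $\langle G^h\rangle$, with $\dreg$ and degree bounded above by those of $\sfI(\F^h,1)$. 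The paper itself glosses over this point and tacitly identifies $\langle G^h\rangle$ with $\sfI(\F^h,1)$, so your instinct to isolate it was sound.
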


In the sequel, $\overline{\K}[X]_D$ denotes $\{f\in \overline{\K}[X]\mid \deg(f)=D\}$, and $\overline{\K}[X]_{D,\homg}$ denotes the homogeneous polynomials in $\overline{\K}[X]_D$.
In order to prove Theorem \ref{theo:dregaff} (the proof is postponed at the end of this section), we first need two technical lemmas.
\begin{lem}\label{lem:LMgener}
  There exists a non-empty Zariski open subset $\mathscr O\subset
  \overline{\K}[X]^p_{D,\homg}$ such that for all $\F^h\in \mathscr O\cap \K[X]^p$,
$\LM_\prec(\sfI(\F^h,1))=\LM_\prec(\sfI(\Fgenh,1))$.
\end{lem}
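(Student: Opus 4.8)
The plan is to prove Lemma \ref{lem:LMgener} by a standard semicontinuity/specialization argument, transferring the leading-monomial ideal from the generic parameters $\mathfrak a$ to a Zariski-open set of concrete coefficient vectors. The key point is that the construction of a grevlex (or any degree) Gröbner basis by row-echelon reduction of Macaulay matrices is, at each degree, a linear-algebra computation whose pivots are determined by the non-vanishing of certain minors of matrices whose entries are polynomial in the coefficients of $\F$. If one performs these computations with the generic system $\Fgenh$ (coefficients being the indeterminates $\mathfrak a$), the pivots that appear are precisely those forced by $\LM_\prec(\sfI(\Fgenh,1))$. Specializing $\mathfrak a$ to a concrete $\F^h$ keeps all these minors non-zero as long as $\F^h$ avoids the zero locus of their product, which is a proper Zariski-closed subset of $\overline{\K}[X]^p_{D,\homg}$ (proper because it does not contain the generic point).

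Concretely, I would proceed as follows. First, fix the degree $\dreg := D(p-1)+(D-2)n+2$; by Theorem \ref{theo:hdreg} and Lemma \ref{lem:homzerodim}, $\sfI(\Fgenh,1)$ is zero-dimensional with this degree of regularity, so $\LM_\prec(\sfI(\Fgenh,1))$ is generated in degrees $\leq \dreg$ and it suffices to control the Macaulay matrices of $\sfI(\cdot,1)$ up to degree $\dreg$. Second, note that the generators of $\sfI(\F^h,1)$ — namely $\F^h$ and the maximal minors $\MaxM(\jac(\F^h,1))$ — have coefficients that are polynomial in the coefficients of $\F^h$; hence for each $d\le\dreg$ the Macaulay matrix $\mathrm{Mac}_{\leq d}$ in degree $d$ has entries polynomial in the coefficients of $\F^h$, and coincides with a specialization of the corresponding matrix over $\mathbb K(\mathfrak a)$. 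Third, the row space of $\mathrm{Mac}_{\le d}$ is exactly $(\sfI(\F^h,1))_d$ (in degree $d$), and $\LM_\prec(\sfI(\F^h,1))$ is determined, degree by degree, by which monomials are leading terms of a reduced row-echelon form of these matrices; by standard linear algebra this is governed by the non-vanishing of a specific collection of maximal minors (the ``staircase'' minors picking out the pivot columns dictated by $\prec$). Taking $\mathscr O$ to be the complement of the union over $d\le\dreg$ of the zero loci of these minors evaluated at the generic system gives the required open set. For $\F^h\in\mathscr O$, the rank and pivot structure of every $\mathrm{Mac}_{\le d}$ agree with the generic one, so $\LM_\prec(\sfI(\F^h,1))=\LM_\prec(\sfI(\Fgenh,1))$. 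Finally, one must check $\mathscr O\neq\emptyset$: this holds because each of the finitely many minors in question is nonzero as a polynomial in $\mathfrak a$ — precisely because these are the minors that are nonzero for the generic system — so their product is a nonzero polynomial, and $\mathscr O\cap\K[X]^p\neq\emptyset$ follows from $\K$ being infinite (or, for the arithmetic/modular setting, from a Schwartz–Zippel-type argument over a large enough extension, which is harmless since we only need the statement over $\overline\K$ with a rational representative).

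The main obstacle is the bookkeeping needed to make ``the pivot structure is generic $\Rightarrow$ the leading-monomial ideal is generic'' fully rigorous: one has to be careful that it is not merely the dimensions $\dim_{\mathbb K}(\mathbb K[X]_d/\sfI(\F^h,1)_d)$ that match the generic case (which would only give equality of Hilbert functions, not of staircases), but the actual position of the pivots with respect to $\prec$. The clean way is to observe that, for a fixed term order $\prec$ and a fixed vector space $V\subseteq \mathbb K[X]_d$ presented as the row space of a matrix $M$ with polynomial entries in parameters $\mathfrak a$, the set of monomials $\{\,\LM_\prec(v) : v\in V\,\}$ is constant on the Zariski-open set where a certain explicit product of minors of $M$ does not vanish — and that product is nonzero at $\mathfrak a$ generic. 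One also needs the elementary fact that $\mathrm{grevlex}$ is a degree ordering (Notations \ref{not:2}), so that controlling each graded piece up to degree $\dreg$ controls $\LM_\prec(\sfI(\F^h,1))$ entirely, using that both ideals are zero-dimensional with the same $\dreg$. Intersecting finitely many such open sets (one per degree $d=1,\dots,\dreg$) yields $\mathscr O$, completing the proof.
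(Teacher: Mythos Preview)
Your argument is correct and is precisely the standard specialization argument that the paper defers to by citing \cite[Proof of Lemma 2]{FauSafSpa11a} rather than writing out. The only points worth tightening are cosmetic: you invoke Theorem~\ref{theo:hdreg} to bound the degrees of the Macaulay matrices you need to control, which is fine logically (Theorem~\ref{theo:hdreg} is proved in Section~\ref{sec:hom} independently of Section~\ref{sec:affine}), but in fact Lemma~\ref{lem:homzerodim} alone already gives a finite $d_0$ with $\LM_\prec(\sfI(\Fgenh,1))$ generated in degrees $\le d_0$, so the explicit value of $\dreg$ is not needed; and your remark that ``rank drops but pivot positions could a priori shift'' is handled exactly as you indicate, since for each non-pivot column the vanishing of the relevant minors over $\K(\mathfrak a)$ forces them to vanish identically in $\K[\mathfrak a]$, so only the finitely many non-vanishing conditions on the pivot minors need to be imposed.
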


\begin{proof}
See e.g. \cite[Proof of Lemma 2]{FauSafSpa11a} for a similar proof.
\end{proof}

\begin{lem}\label{lem:deghomaff}
  Let $G=(g_1,\ldots, g_m)$ be a polynomial family and let
  $G^h=(g^h_1,\ldots, g^h_m)$ denote the family of homogeneous components of
  highest degree of $G$.  If the dimension of the ideal $\langle G^h \rangle$ is
  $0$, then $\DEG(\langle G\rangle)\leq \DEG(\langle G^h\rangle)$.
\end{lem}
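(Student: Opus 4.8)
The plan is to compare $\langle G\rangle$ and $\langle G^h\rangle$ through their leading-term ideals for a suitably chosen monomial ordering. First I would fix a \emph{degree ordering} $\prec$ on $\K[X]$ (grevlex, say), and recall the classical fact (see e.g.\ \cite[Ch.~5]{CoxLitShe97}) that for every ideal $J\subset\K[X]$ the monomials lying outside $\LM_\prec(J)$ form a $\K$-basis of both $\K[X]/J$ and $\K[X]/\LM_\prec(J)$; in particular $\dim_\K(\K[X]/J)=\dim_\K(\K[X]/\LM_\prec(J))$, and this integer is finite exactly when $\dim(J)=0$. Granting this, the whole statement reduces to the single inclusion
$$\LM_\prec(\langle G^h\rangle)\subseteq \LM_\prec(\langle G\rangle).$$
Indeed, combined with the hypothesis $\dim(\langle G^h\rangle)=0$, this inclusion shows that $\K[X]/\langle G\rangle$ has at most as many standard monomials as $\K[X]/\langle G^h\rangle$, hence is finite-dimensional, so $\langle G\rangle$ is zero-dimensional as well, and then $\DEG(\langle G\rangle)=\dim_\K(\K[X]/\langle G\rangle)\le\dim_\K(\K[X]/\langle G^h\rangle)=\DEG(\langle G^h\rangle)$.

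To establish the inclusion I would argue by lifting. Because $\prec$ is a degree ordering, any nonzero $g\in\K[X]$ and its highest-degree homogeneous component $g^h$ share the same leading monomial; since moreover $\langle G^h\rangle$ is homogeneous, this lets me describe $\LM_\prec(\langle G^h\rangle)$ as the ideal generated by the monomials $\LM_\prec(h)$, where $h$ runs over the nonzero \emph{homogeneous} elements of $\langle G^h\rangle$. I would then take such an $h$, of degree $e$, write $h=\sum_i a_i g_i^h$ with each $a_i$ homogeneous of degree $e-\deg(g_i)$, and set $\tilde h:=\sum_i a_i g_i\in\langle G\rangle$. The highest-degree component of $\tilde h$ is then exactly $\sum_i a_i g_i^h=h\neq 0$, so once more $\LM_\prec(\tilde h)=\LM_\prec(h)$ (degree ordering), which yields $\LM_\prec(h)\in\LM_\prec(\langle G\rangle)$ and proves the inclusion.

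The step I expect to require the most care is precisely this lifting argument — namely, the reduction to homogeneous $h$ when computing $\LM_\prec(\langle G^h\rangle)$, and the fact that replacing $\tilde h$ by its highest-degree form does not change the leading monomial. Both hinge on $\prec$ being compatible with total degree, so the choice of a degree ordering here is essential and not cosmetic; everything else is bookkeeping. (An alternative route would be to homogenize $G$ with a fresh variable $x_0$ and compare the Hilbert functions of $\K[x_0,X]/\langle g_1^{(0)},\ldots,g_m^{(0)}\rangle$ and of $\K[X]/\langle G^h\rangle$ through the right-exact sequence given by multiplication by $x_0$; this works as well, but the leading-term argument above is shorter and relies only on notions already introduced here, cf.\ \cite[Proof of Lemma~2]{FauSafSpa11a}.)
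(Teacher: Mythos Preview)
Your proof is correct and follows essentially the same route as the paper: fix a degree ordering, establish the inclusion $\LM_\prec(\langle G^h\rangle)\subseteq\LM_\prec(\langle G\rangle)$ by lifting, and conclude via $\DEG(I)=\DEG(\LM_\prec(I))$. If anything, your version is more careful: the paper writes ``there exist polynomials $s_1,\ldots,s_m$ such that $\LM_\prec\bigl(\sum s_i g_i^h\bigr)=m$'' and then asserts $\LM_\prec\bigl(\sum s_i g_i\bigr)=m$ without restricting the $s_i$, whereas your reduction to homogeneous $h$ and homogeneous coefficients $a_i$ of the correct degree is exactly what makes that step airtight.
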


\begin{proof}
  Let $\prec$ be an admissible degree monomial ordering. Let
  $\LM_\prec(h)$ denote the leading monomial of a polynomial $h$
  with respect to $\prec$.  Let $m\in
  \LM_\prec(\langle G^h\rangle)$ be a monomial. Then there exist
  polynomials $s_1,\ldots, s_m$ such that $\LM_\prec\left(\sum_{i=1}^m s_i g^h_i\right) = m.$
Since $\prec$ is a degree ordering,
$\LM_\prec\left(\sum_{i=1}^m s_i g_i\right) = m. $
Therefore $\LM_\prec(\langle G^h\rangle)\subset\LM_\prec(\langle
G\rangle)$.  If the ideal $\langle G^h \rangle$ is
$0$-dimen\-sional, then so is $\langle G\rangle$ and
hence $\DEG(\LM_\prec(\langle G\rangle))\leq\DEG(\LM_\prec(\langle
G\rangle))$. Since
$\DEG(I)=\DEG(\LM_\prec(I))$, we obtain
$\DEG(\langle G\rangle)\leq\DEG(\langle G^h\rangle).$\\
\end{proof}

\begin{proof}[Proof of Theorem {\ref{theo:dregaff}}]
  Let $\prec$ be a degree monomial ordering, and $\F^h=(f_1^h,\ldots,
  f_p^h)\in\overline{\K}[X]^p_{D,\homg}$ denote the homogeneous system
  where $f_i^h$ is the homogeneous component of highest degree of
  $f_i$.  By Lemma \ref{lem:LMgener}, there exists a non-empty Zariski
  subset $\mathscr O\subset \overline{\K}[X]^p_D$ such that, for any
  $\F$ in $\mathscr O\cap \K[X]^p$,
  $\LM_\prec(\sfI(\F^h,1))=\LM_\prec(\sfI(\Fgenh,1))$.  By \cite[Ch.9,
  \S 3, Prop.9]{CoxLitShe97}, the Hilbert series (and thus the
  dimension, the degree, and the degree of regularity) of a
  homogeneous ideal is the same as that of its leading monomial ideal.
  Hence, by Lemma \ref{lem:homzerodim},
$$\begin{array}{rcl}\dim(\sfI(\F^h,1))&=&\dim(\LM_\prec(\sfI(\F^h,1)))\\&=&\dim(\LM_\prec(\sfI(\Fgenh,1)))\\&=&\dim(\sfI(\Fgenh,1))=0.\end{array}$$
Similarly, by Theorem \ref{theo:hdreg}, $$\dreg(\sfI(\F^h,1))=\dreg(\sfI(\Fgenh,1))=D (p-1) +
  (D-2)n +2.$$

The highest degree reached during the $F_5$ Algorithm is upper bounded
by the degree of regularity of the ideal generated by the homogeneous
components of highest degree of the generators when this homogeneous
ideal has dimension 0 (see e.g. \cite{BarFauSalYan05} and references
therein).  Therefore, the highest degree reached during the
computation of a Gr\"obner basis of $\sfI(\F,1)$ with the $F_5$
Algorithm with respect to a degree ordering is upper bounded by
$$ \dreg\leq D (p-1) + (D-2)n +2. $$
\noindent
The bound on the degree is obtained by Corollary \ref{cor:deghom} and Lemma \ref{lem:deghomaff},
$$\begin{array}{rcl}\DEG(\sfI(\F,1))&\leq& \DEG(\sfI(\F^h,1))\\&\leq&\DEG(\LM_\prec(\sfI(\Fgenh,1)))\\&\leq&\binom{n-1}{p-1} D^p  (D-1)^{n-p}.\end{array}$$
\end{proof}

 \section{Complexity}
\label{sec:compl}
In the sequel, $\omega$ is a real number such that there exists an
algorithm which computes the row echelon form of $n\times n$ matrix in
$O(n^\omega)$ arithmetic operations (the best known value is
$\omega\approx 2.376$ by using Coppersmith-Winograd algorithm, see
\cite{Sto00}).

\begin{theo}\label{theo:compl}
There exists a non-empty Zariski open subset $\mathscr O\subset
  \overline{\K}[X]^p_D$, such that, for all $\F\in\mathscr O\cap \K[X]^p$,
the arithmetic complexity of computing a lexicographical Gr\"obner basis
of $\sfI(\F,1)$ is upper bounded by
 $$O\left(\binom{D (p-1)+(D-1) n+2}{D (p-1)+(D-2) n+2}^\omega + n
\binom{n-1}{p-1}^3 D^{3 p} (D-1)^{3 (n-p)} \right).$$
\end{theo}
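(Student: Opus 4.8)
The plan is to combine the two-step solving strategy (grevlex Gröbner basis via $F_5$, then FGLM) with the degree and degree-of-regularity bounds established in Theorem~\ref{theo:dregaff}. I would first recall that, by Theorem~\ref{theo:dregaff}, there is a non-empty Zariski open set $\mathscr O\subset\overline{\K}[X]^p_D$ such that for $\F\in\mathscr O\cap\K[X]^p$ the ideal $\sfI(\F,1)$ is zero-dimensional, its degree of regularity is at most $\dreg=D(p-1)+(D-2)n+2$, and $\DEG(\sfI(\F,1))\leq\delta:=\binom{n-1}{p-1}D^p(D-1)^{n-p}$. These two quantities are exactly what control the costs of the two steps, so the proof is essentially a bookkeeping argument once they are in place.

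For the first (grevlex $F_5$) step, the standard analysis (see e.g.\ \cite{BarFauSalYan05} and the references therein) says that computing a grevlex Gröbner basis of $\sfI(\F,1)$ reduces to computing row echelon forms of submatrices of the Macaulay matrix in degrees up to $\dreg$. The number of columns of the Macaulay matrix in degree $\leq\dreg$ in $n$ variables is $\binom{n+\dreg}{n}=\binom{D(p-1)+(D-1)n+2}{D(p-1)+(D-2)n+2}$ (using $n+\dreg = D(p-1)+(D-1)n+2$ and rewriting $\binom{N}{n}=\binom{N}{N-n}$). Hence this step costs $O\!\left(\binom{n+\dreg}{n}^\omega\right)$ arithmetic operations in $\K$, which is the first summand. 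For the second step, FGLM (\cite{FauGiaLazMor93}, or its sparse variant \cite{FauMou11}) performs linear algebra on matrices of size the $\K$-dimension of $\K[X]/\sfI(\F,1)$, which equals $\DEG(\sfI(\F,1))\leq\delta$; the change of ordering to lex then costs $O(n\,\DEG(\sfI(\F,1))^3)\leq O(n\,\delta^3)=O\!\left(n\binom{n-1}{p-1}^3 D^{3p}(D-1)^{3(n-p)}\right)$ operations, which is the second summand. Summing the two contributions gives the stated bound.

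The step I expect to require the most care is justifying that the highest degree reached by $F_5$ on the \emph{affine} (inhomogeneous) ideal $\sfI(\F,1)$ is genuinely bounded by $\dreg$, rather than merely the degree of regularity of the top-degree homogeneous part being $\dreg$: one must invoke the fact that for a degree ordering, when the homogenized leading forms generate a zero-dimensional ideal, no degree fall beyond $\dreg$ occurs, so the Macaulay matrices one needs never exceed degree $\dreg$. This is precisely the content of Theorem~\ref{theo:dregaff} together with the $F_5$ termination criterion, so invoking it closes the gap. A minor additional point is that the genericity open set $\mathscr O$ is the one already produced in Theorem~\ref{theo:dregaff}; no further shrinking is needed, since both the complexity of the linear algebra steps and the correctness of the $F_5$/FGLM pipeline follow from the dimension, degree, and degree-of-regularity bounds valid on that set. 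Finally, I would note that all estimates are upper bounds obtained from majorations (e.g.\ replacing the actual number of $S$-pairs/rows by the full Macaulay matrix column count, and the actual multiplication-matrix sizes by $\delta$), which keeps the statement self-consistent with the sharper bounds announced later in the paper.
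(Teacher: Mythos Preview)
Your proposal is correct and follows essentially the same approach as the paper: invoke Theorem~\ref{theo:dregaff} for the bounds on $\dreg$ and $\DEG(\sfI(\F,1))$, plug the first into the $F_5$ complexity $O\bigl(\binom{n+\dreg}{n}^\omega\bigr)$ from \cite{BarFauSal04,BarFauSalYan05} and the second into the FGLM complexity $O(n\,\DEG^3)$ from \cite{FauGiaLazMor93}, and sum. Your remarks on the affine-versus-homogeneous degree-of-regularity issue and on reusing the same Zariski open set $\mathscr O$ are apt and match how the paper handles these points.
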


\begin{proof}
  According to \cite{BarFauSal04,BarFauSalYan05}, the complexity of
  computing a Gr\"obner basis with the $F_5$ Algorithm with respect to
  the grevlex ordering of a zero-dimensional ideal is upper bounded by
$$O\left(\binom{n+\dreg}{\dreg}^\omega\right)$$
 where $\dreg$  
is the highest degree reached during the computation. In order to obtain a
lexicographical Gr\"obner basis, one can use the FGLM algorithm
\cite{FauGiaLazMor93}. Its complexity is $O\left(n
  \DEG(\sfI(\F,1))^3\right)$ (better complexity bounds are known in
specific cases, see \cite{FauMou11}).

According to Theorem \ref{theo:dregaff}, there exists a non-empty
Zariski open subset $\mathscr O\subset \overline{\K}[X]^p_D$ such
that, for all $\F$ in $\mathscr O\cap \K[X]^p$,
$$\begin{array}{rcl}\dreg(\sfI(\F,1)) &\leq& D (p-1) + (D-2)n +2,\\
  \DEG(\sfI(\F,1))&\leq&\binom{n-1}{p-1} D^p  (D-1)^{n-p}.\end{array}$$
Therefore, for all $\F$ in $\mathscr O\cap \K[X]^p$,
the total complexity of computing a lexicographical Gr\"obner basis
of $\sfI(\F,1)$: { $$O\left(\binom{D (p-1)+(D-1) n+2}{D
      (p-1)+(D-2) n+2}^\omega + n \binom{n-1}{p-1}^3 D^{3 p} (D-1)^{3
      (n-p)} \right).$$}
\end{proof}

\begin{coro}\label{coro:compln}
  If $D=2$, then there exists a
  non-empty Zariski open subset $\mathscr O\subset \overline{\K}[X]^p_2$, such that for all $\F\in\mathscr O\cap \K[X]^p$, the
  arithmetic complexity of computing a lexicographical Gr\"obner basis
  of $\sfI(\F,1)$ is upper bounded by 
$$O\left( \binom{n+2p}{2p}^\omega+n 2^{3p}\binom{n-1}{p-1}^3\right).$$
Moreover, if $p$ is constant and $D=2$, the arithmetic complexity is upper bounded by $O\left(n^{2p\omega}\right)$.
\end{coro}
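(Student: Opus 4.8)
The plan is to specialize Theorem~\ref{theo:compl} to $D=2$ and simplify. First I would substitute $D=2$ into the two bounds of Theorem~\ref{theo:dregaff}: the degree of regularity becomes $\dreg \leq 2(p-1) + 0\cdot n + 2 = 2p$, and the degree becomes $\DEG(\sfI(\F,1)) \leq \binom{n-1}{p-1} 2^p (2-1)^{n-p} = 2^p\binom{n-1}{p-1}$. Plugging these into the complexity estimate $O\left(\binom{n+\dreg}{\dreg}^\omega + n\,\DEG(\sfI(\F,1))^3\right)$ from the proof of Theorem~\ref{theo:compl} yields $\binom{n+2p}{2p}^\omega$ for the $F_5$ step and $n\,2^{3p}\binom{n-1}{p-1}^3$ for the FGLM step, giving the first displayed bound. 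The same non-empty Zariski open set $\mathscr O$ works since it is the one furnished by Theorem~\ref{theo:dregaff}.

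For the second assertion, with $p$ fixed, I would bound each of the two terms by a polynomial in $n$. For the first term, $\binom{n+2p}{2p}$ is a polynomial in $n$ of degree $2p$ (the leading term is $n^{2p}/(2p)!$), so $\binom{n+2p}{2p}^\omega = O(n^{2p\omega})$. For the second term, $\binom{n-1}{p-1}$ is a polynomial in $n$ of degree $p-1$, so $n\,2^{3p}\binom{n-1}{p-1}^3 = O(n^{1 + 3(p-1)}) = O(n^{3p-2})$, the factor $2^{3p}$ being an absolute constant once $p$ is fixed. Since $\omega \geq 2$, we have $2p\omega \geq 4p > 3p - 2$, so the first term dominates asymptotically and the total complexity is $O(n^{2p\omega})$.

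This corollary is essentially a routine calculation from the preceding results, so I do not expect any genuine obstacle; the only point requiring a little care is checking that the $F_5$ term indeed dominates the FGLM term when $p$ is constant, which follows from $\omega \geq 2$ as above. One could also remark that $\binom{n+2p}{2p}^\omega$ already exhibits the promised behavior — polynomial in $n$ for fixed $p$, and with the exponent $2p\omega$ growing linearly in the codimension $p$ — which is the qualitative statement announced as answer \textbf{(C)} in the introduction.
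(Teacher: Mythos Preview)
Your proposal is correct and follows the same approach as the paper: the paper's proof is a single sentence stating that the bound is obtained by substituting $D=2$ into Theorem~\ref{theo:compl}. You have simply filled in the arithmetic and, in particular, given an explicit justification for the ``Moreover'' clause (showing that the $F_5$ term $O(n^{2p\omega})$ dominates the FGLM term $O(n^{3p-2})$ since $\omega\ge 2$), which the paper leaves implicit.
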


\begin{proof}
This complexity is obtained by putting $D=2$ in the formula from
Theorem \ref{theo:compl}.
\end{proof}
In the sequel, the binary entropy function is denoted by $h_2$:
$$\forall x\in [0,1], h_2(x)=-x\log_2 (x)-(1-x)\log_2(1-x).$$
\begin{coro}\label{coro:complD}
  Let $D>2$ and $p\in\mathbb N$ be constant. There exists a non-empty
  Zariski open subset $\mathscr O\subset \overline{\K}[X]^p_D$, such that,
  for all $\F\in\mathscr O\cap \K[X]^p$, the arithmetic complexity of computing a
  lexicographical Gr\"obner basis of $\sfI(\F,1)$ is upper bounded
  by $$\displaystyle O\left(\frac{1}{\sqrt{n}}
    2^{(D-1)h_2\left(\frac{1}{D-1}\right)n \omega}\right)= O\left(
    (D-1)^{3.57 n}\right).$$
\end{coro}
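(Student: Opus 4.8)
The plan is to specialise the complexity estimate of Theorem~\ref{theo:compl} to the regime where $p$ and $D$ are fixed and $n\to\infty$, and then to analyse its two summands asymptotically. One takes $\mathscr O\subset\overline{\K}[X]^p_D$ to be the non-empty Zariski open set provided by Theorem~\ref{theo:compl}, so that for every $\F\in\mathscr O\cap\K[X]^p$ the cost of computing a grevlex Gr\"obner basis of $\sfI(\F,1)$ and then running FGLM is
$$O\left(\binom{D(p-1)+(D-1)n+2}{D(p-1)+(D-2)n+2}^{\omega}+n\binom{n-1}{p-1}^{3}D^{3p}(D-1)^{3(n-p)}\right);$$
it then remains only to bound the growth in $n$ of each of the two terms and to simplify the result to the announced closed form.

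For the first term I would rewrite the binomial coefficient through its complementary index, $\binom{D(p-1)+(D-1)n+2}{D(p-1)+(D-2)n+2}=\binom{(D-1)n+c}{n}$ with $c=D(p-1)+2$ a constant, and then apply Stirling's formula in the shape $\binom{N}{K}=\Theta\!\big(N^{1/2}[K(N-K)]^{-1/2}2^{N h_2(K/N)}\big)$ with $N=(D-1)n+c$ and $K=n$; since the bounded shift $c$ only perturbs the implied constant, this gives $\binom{(D-1)n+c}{n}=\Theta\!\big(n^{-1/2}2^{(D-1)h_2(1/(D-1))n}\big)$, and hence, using $\omega\ge1$ so that $n^{-\omega/2}=O(n^{-1/2})$, the first summand is $O\!\big(n^{-1/2}2^{(D-1)h_2(1/(D-1))n\omega}\big)$. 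For the second term, with $p$ and $D$ fixed we have $\binom{n-1}{p-1}=O(n^{p-1})$ and $D^{3p}=O(1)$, so it is $O\!\big(n^{3p-2}(D-1)^{3n}\big)$. To reach the closed form I would then study the univariate function $g(D)=(D-1)h_2(1/(D-1))$: it is increasing in $D$, and $\sup_{D\ge3}g(D)/\log_2 D$ is attained near $D=5,6$ with value below $3/2$, so $\omega\,g(D)\le3.57\log_2 D$ for every integer $D\ge3$; consequently $2^{(D-1)h_2(1/(D-1))n\omega}=2^{\omega g(D)n}=O(D^{3.57n})$, and likewise $(D-1)^{3n}$ times a fixed power of $n$ is $O(D^{3.57n})$. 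Absorbing the $n^{-1/2}$ factor, both terms — and hence the total cost — are $O(D^{3.57n})$, the sharper form $O\!\big(n^{-1/2}2^{(D-1)h_2(1/(D-1))n\omega}\big)$ being the one governed by the first summand.

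No new algebraic input is needed here: the statement follows purely from Theorems~\ref{theo:hdreg}, \ref{theo:dregaff} and~\ref{theo:compl}. I expect the only somewhat delicate point to be the asymptotic bookkeeping: tracking the $n^{-1/2}$ prefactor correctly through Stirling's formula, checking that the fixed shift $c$ in the upper index of the binomial coefficient changes only the implied constant, and — above all — verifying that a single numerical constant (about $\tfrac32\omega\approx3.57$) bounds $\omega\,g(D)/\log_2 D$ uniformly over all $D\ge3$ rather than merely pointwise.
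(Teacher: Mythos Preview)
Your overall plan matches the paper's: take $\mathscr O$ from Theorem~\ref{theo:compl}, apply Stirling to $\binom{(D-1)n+c}{n}$ with $c=D(p-1)+2$ to obtain the $n^{-1/2}\,2^{(D-1)h_2(1/(D-1))\,n}$ form (the paper does exactly this, first noting $\binom{N}{xN}=O(N^{-1/2}2^{h_2(x)N})$ and then that the constant shift $c$ is harmless), and bound the FGLM term by $O\!\big(n^{3p}(D-1)^{3n}\big)$. Up to that point your argument and the paper's are essentially identical.

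The gap is in the closed-form step. You study $g(D)/\log_2 D$, conclude $\omega\,g(D)\le 3.57\,\log_2 D$, and end with both terms bounded by $O(D^{3.57n})$. But the corollary asserts $O\!\big((D-1)^{3.57n}\big)$, not $O(D^{3.57n})$; since $D>D-1$, what you prove is strictly weaker and does not give the stated bound. Your supremum computation is taken over the wrong denominator: to hit base $(D-1)$ you would need $g(D)/\log_2(D-1)\le 3.57/\omega$, and already at $D=3$ one has $g(3)/\log_2 2=2$, so the ``below $3/2$'' argument collapses there. The paper obtains the base $(D-1)$ differently: it bounds $2^{(D-1)h_2(1/(D-1))}\le (D-1)e$ (equivalently, uses $(1+\tfrac{1}{D-2})^{D-2}<e$), rewrites $(D-1)e=(D-1)^{1+1/\log(D-1)}$, and then argues that $(1+1/\log(D-1))\,\omega\le 3.57$. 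If you want your write-up to match the statement, you should switch to this $(D-1)e$ majoration rather than comparing to $\log_2 D$.
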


\begin{proof}
Let $x$ be a real number in $[0,1]$. Then by applying Stirling's Formula, we obtain that
$$\binom{n}{x n}=O\left(\frac{1}{\sqrt{n}} 2^{h_2(x) n}\right).$$

Therefore, $$\begin{array}{rcl}\binom{(D-1)n}{n}&=&O\left(\frac{1}{\sqrt{n}}
    2^{(D-1)h_2\left(\frac{1}{D-1}\right) n}\right)\\
  &=& O\left(\frac{1}{\sqrt{n}} ((D-1) e)^{n}\right).\end{array}$$ Let
$C$ denote the constant $D (p-1)+2$. Then
$$\begin{array}{rcl}\binom{D (p-1)+(D-1) n+2}{D
    (p-1)+(D-2)
    n+2}&=&\binom{(D-1)n+C}{n}=O\left(\binom{(D-1)n}{n}\right)\\&=&O\left(\frac{1}{\sqrt{n}}
    2^{(D-1) h_2\left(\frac{1}{D-1}\right)n}\right).\end{array}$$ The
right summand in the complexity formula given in Theorem
\ref{theo:compl} is $O\left(n^{3p}(D-1)^{3n}\right)$ when $p$ and $D$
are constants; this is upper bounded by
$$O\left(\frac{1}{\sqrt{n}} 2^{(D-1)
    h_2\left(\frac{1}{D-1}\right)n\omega}\right).$$ Let
$\mathscr O$ be the non-empty Zariski open subset defined in Theorem
\ref{theo:compl}. For all $\F\in \mathscr O\cap \K[X]^p$, the arithmetic
complexity of computing a grevlex Gr\"obner basis of $\F$ is upper
bounded
by $$\begin{array}{r@{\,}c@{\,}l}O\left(\frac{1}{\sqrt{n}}
    2^{(D-1)h_2\left(\frac{1}{D-1}\right)n
      \omega}\right)&=&O\left(\frac{1}{\sqrt{n}}
    ((D-1)e)^{n\omega}\right)\\
  &=&O\left((D-1)^{\left(1+1/\log(D-1)\right)n\omega}\right)\\
  &=&O\left((D-1)^{3.57 n}\right),\end{array}$$ since $D\geq 3$ and
$\omega\leq 2.376$ with Coppersmith-Winograd algorithm.

On the other hand the asymptotic complexity of the FGLM part of the solving process is
$$O\left(n^{3(p-1)+1}(D-1)^{3n}\right)=\widetilde O\left(\left(D-1\right)^{3n}\right),$$
which is upper bounded by the complexity of the grevlex Gr\"obner basis computation.
\end{proof}

The following corollary shows that the arithmetic complexity is polynomial in the number of critical points.

\begin{coro}\label{coro:uniformbound}
  For $D\geq 3$, $p\geq 2$ and $n\geq 2$, There exists a non-empty Zariski open subset $\mathscr O\subset
  \overline{\K}[X]^p_D$, such that, for $\F\in\mathscr O\cap \K[X]^p$, the arithmetic complexity of
  computing a lexicographical Gr\"obner basis of $\sfI(\F,1)$ is
  upper bounded by
{$$O\left(\DEG\left(\sfI(\F,1)\right)^{{\max\left(\frac{\log(2eD)}{\log(D-1)}\omega,4\right)}}\right)\leq O\left(\DEG\left(\sfI(\F,1)\right)^{4.03 \omega}\right).$$}
\end{coro}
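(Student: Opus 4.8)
The plan is to derive this corollary directly from the complexity bound of Theorem~\ref{theo:compl} by comparing each of the two summands against a suitable power of $\DEG(\sfI(\F,1)) = \binom{n-1}{p-1}D^p(D-1)^{n-p} =: \delta$. The key observation is that both the Gr\"obner basis part $\binom{n+\dreg}{n}^\omega$ and the FGLM part $n\,\binom{n-1}{p-1}^3D^{3p}(D-1)^{3(n-p)}$ must be written as $\delta^{c}$ for explicit exponents $c$, and the final exponent is the maximum of the two. For the FGLM term this is essentially immediate: $n\binom{n-1}{p-1}^3D^{3p}(D-1)^{3(n-p)} = n\,\delta^3$, and $n\leq\delta$ in the stated range $D\geq 3, p\geq 2, n\geq 2$ (since $\delta$ grows at least like $(D-1)^{n-p}\binom{n-1}{p-1}\geq n$), so this summand is $O(\delta^4)$, contributing the exponent $4$ inside the maximum.

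The harder estimate is the Gr\"obner basis summand. First I would recall from the proof of Corollary~\ref{coro:complD} that $\binom{D(p-1)+(D-1)n+2}{n} = O\!\left(\binom{(D-1)n}{n}\right) = O\!\left(\frac{1}{\sqrt n}((D-1)e)^n\right)$, so the first summand of Theorem~\ref{theo:compl} is $O\!\left(((D-1)e)^{n\omega}\right)$ up to the polynomial factor $n^{-\omega/2}$. On the other hand $\delta \geq (D-1)^{n-p}$, hence $\delta \geq (D-1)^{n}\cdot(D-1)^{-p}$; since $p$ is essentially bounded by $n$ but the cleanest route is to lower bound $\delta$ by something of the form $(D-1)^{n}$ times a harmless factor, I would instead use $\delta \geq D^p(D-1)^{n-p}\binom{n-1}{p-1} \geq (D-1)^{n}$ for $D\geq 3$ (here one uses $D^p\geq(D-1)^p$ to absorb the shift). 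Then
$$
((D-1)e)^{n\omega} = \left((D-1)^{n}\right)^{\left(1+\frac{1}{\log(D-1)}\right)\omega} \leq \delta^{\left(1+\frac{1}{\log(D-1)}\right)\omega}.
$$
A small rewriting gives $1+\tfrac{1}{\log(D-1)} = \tfrac{\log(D-1)+1}{\log(D-1)} = \tfrac{\log(e(D-1))}{\log(D-1)}$, and to reach the stated form $\tfrac{\log(2eD)}{\log(D-1)}$ I would absorb the polynomial factors (the $n^{-\omega/2}$ and the $n$ in front of FGLM, as well as the binomial coefficients $\binom{n-1}{p-1}$ hidden in the comparison) by slightly inflating the base from $e(D-1)$ to $2eD$; concretely, $\binom{n+\dreg}{n}^\omega\leq \big((2D)^n\big)^{\omega}$ for $n$ large enough since $\binom{n+\dreg}{n}$ is polynomial in $n$ once $D,p$ are fixed, and then $(2D)^{n\omega}\leq \delta^{\frac{\log(2D)}{\log(D-1)}\omega}$... but to also kill the $e$ coming from Stirling one uses $2eD$. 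This gives the first exponent $\frac{\log(2eD)}{\log(D-1)}\omega$ inside the maximum.

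Combining, the total complexity is $O\!\left(\delta^{\max\left(\frac{\log(2eD)}{\log(D-1)}\omega,\,4\right)}\right)$, which is the first inequality. For the second inequality I would check that $\frac{\log(2eD)}{\log(D-1)}$ is decreasing in $D$ for $D\geq 3$ (numerator grows logarithmically, denominator grows faster), so its maximum over $D\geq 3$ is attained at $D=3$, where $\frac{\log(6e)}{\log 2} = \frac{\log(6e)}{\log 2}\approx \frac{2.79}{0.693}\approx 4.03$; hence $\max\left(\frac{\log(2eD)}{\log(D-1)}\omega, 4\right)\leq 4.03\,\omega$ using $\omega\geq 1$ so that $4\leq 4\omega\leq 4.03\omega$. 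I expect the main obstacle to be purely bookkeeping: carefully tracking which polynomial-in-$n$ factors (Stirling's $1/\sqrt n$, the linear $n$ from FGLM, the binomial $\binom{n-1}{p-1}$) get absorbed into the exponential base, and making sure the chosen base $2eD$ is genuinely large enough to dominate all of them uniformly for $n\geq 2$; none of this is deep, but it is where an off-by-a-constant slip would occur, and it is the reason the paper flags the $4.03\omega$ bound as "rather pessimistic."
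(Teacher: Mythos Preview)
Your overall strategy matches the paper's: bound the FGLM term by $\delta^4$ via $n\leq\delta$, bound the grevlex term by $\delta^{\frac{\log(2eD)}{\log(D-1)}\omega}$, and then maximize over $D\geq 3$ to get $4.03$. The FGLM part and the final maximization are fine and essentially identical to the paper.

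The grevlex estimate, however, has a genuine gap. You import from Corollary~\ref{coro:complD} the bound $\binom{D(p-1)+(D-1)n+2}{n}=O\!\big(\binom{(D-1)n}{n}\big)$, but that estimate is valid only because there $D$ and $p$ are \emph{fixed}, so $C=D(p-1)+2$ is a constant and $\binom{(D-1)n+C}{n}=O(\binom{(D-1)n}{n})$. Corollary~\ref{coro:uniformbound} carries no such hypothesis on $p$; in particular $p$ may be as large as $n-1$, so $C$ is not bounded and that big-$O$ does not hold uniformly. Your attempted repair then asserts that ``$\binom{n+\dreg}{n}$ is polynomial in $n$ once $D,p$ are fixed'', which is false: for $D\geq 3$ one has $\dreg=(D-2)n+D(p-1)+2$, so $\binom{n+\dreg}{n}\sim\binom{(D-1)n}{n}$ grows exponentially in $n$ (e.g.\ $\binom{2n}{n}\sim 4^n/\sqrt{\pi n}$ for $D=3$). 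So neither the borrowed estimate nor the repair stands.

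The paper's route is different and is precisely what produces the base $2eD$. Using $p\leq n-1$ (implicit throughout, since $\jac(\F,1)$ is $p\times(n-1)$) one gets the \emph{uniform} inequality $D(p-1)+(D-1)n+2\leq 2Dn$, hence
\[
\binom{D(p-1)+(D-1)n+2}{n}\;\leq\;\binom{2Dn}{n}\;\leq\;(2D)^n\,\frac{n^n}{n!}\;\leq\;C_0\,(2De)^n,
\]
with $C_0$ coming from Stirling. Combined with your (correct) lower bound $\delta\geq (D-1)^n$, i.e.\ $n\leq\log\delta/\log(D-1)$, this yields $(2De)^{n\omega}\leq \delta^{\frac{\log(2eD)}{\log(D-1)}\omega}$. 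So the appearance of $2eD$ is not ``bookkeeping'' to absorb stray polynomial factors; it is exactly the price of making the binomial bound uniform in $p$.
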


\begin{proof}
  Let $\mathscr O\subset \overline{\K}[X]^p_D$ be the non-empty
  Zariski open subset defined in Theorem \ref{theo:dregaff}, and
  $\F\in \mathscr O\cap \K[X]^p_D$ be a polynomial family.  First, notice that,
  since $p\geq 2$ and $n\geq 2$,
$$\begin{array}{rcl}
\DEG\left(\sfI(\F,1)\right)&=& \binom{n-1}{p-1} (D-1)^{n-p}D^p\\
&\geq& n
\end{array}$$
Therefore the complexity of the FGLM algorithm is upper boun\-ded by
$$O\left(n \DEG\left(\sfI(\F,1)\right)^3\right) \leq O\left(\DEG\left(\sfI(\F,1)\right)^{4}\right).$$
The complexity of computing a grevlex Gr\"obner basis of $\sfI(\F,1)$ is upper bounded by
$$\begin{array}{r@{~}c@{~}l}
\mathsf{GREVLEX}(p,n,D)&=& O\left(\binom{D(p-1)+(D-1)n+2}{n}^\omega\right)\\&\leq& O\left(\binom{2 D n}{n}^\omega\right).\end{array}$$
Notice that
$\binom{2 D n}{n}\leq (2 D)^n \frac{n^n}{n!}.$
By Stirling's formula, there exists $C_0$ such that
$\frac{n^n}{n!}\leq C_0 e^n$.  Hence $\mathsf{GREVLEX}(p,n,D)=
O\left((2 D e)^n\right)$.  

Since $D\geq 3$ and $n\leq \log(\DEG(\sfI(\F,1)))/\log(D-1)$, we obtain 
$$\begin{array}{rcl}O\left((2 D e)^{n\omega}\right)&\leq&
O\left(D^{\frac{\log(2 e D)}{\log D} n \omega}\right)\\&\leq&
O\left(\DEG\left(\sfI(\F,1)\right)^{\frac{\log(2 e D)}{\log
      (D-1)}\omega}\right).\end{array}$$
The function $D\mapsto \frac{\log(2 e D)}{\log (D-1)}$ is decreasing,
and hence its maximum is reached for $D=3$, and $\frac{\log(6 e)}{\log
  (2)}\leq 4.03$.
\end{proof}

Notice that in the complexity formula in Corollary
\ref{coro:uniformbound}, the exponent
$\frac{\log(2eD)}{\log(D-1)}\omega$ tends towards $\omega$ when $D$
grows. Therefore, when $D$ is large, the complexity of the grevlex
Gr\"obner basis computation is close to the cost of linear algebra
$O\left(\DEG(\sfI(\F,1))^\omega\right).$ Also, we would like to
point out that the bound in Corollary \ref{coro:uniformbound} is not
sharp since the formula $O\left(\binom{n+\dreg}{n}^\omega\right)$ for
the complexity of the $F_5$ algorithm is pessimistic, and the
majorations performed in the proof of Corollary
\ref{coro:uniformbound} are not tight.

 \section{Experimental Results}
\label{sec:experiments}
In this section, we report experimental results supporting the
theoretical complexity results in the previous sections.  Since
our complexity results concern the arithmetic complexity, we run
experiments where $\mathbb K$ is the finite field
$\mathsf{GF}(65521)$ (Figure \ref{fig:GFexp}), so that the timings represent the arithmetic
complexity. In that case, systems are chosen uniformly at random in $\mathsf{GF}(65521)[X]_D$.

We give experiments by using respectively the implementation of
$F_4$ and FGLM algorithms in the \textsc{Magma} Computer Algebra
Software, and by using the $F_5$ and FGLM implementations from the FGb
package.

All experiments were conducted on a 2.93 GHz Intel Xeon 
with 132 GB RAM.

\begin{figure}
\centering{
\begin{tabular}{|c|c|c|c|c|c|c|}
\hline
$n$&$p$&$D$&$\dreg$&$\DEG$&$F_4$ time&FGLM time\\
\hline
\hline
9&4&2&8&896&3.12s&18.5s\\
11&4&2&8&1920&61s&202s\\
13&4&2&8&3520&369s&1372s\\
15&4&2&8&5824&2280s&7027s\\
17&4&2&8&8960&10905s&$>$1d\\
30&2&2&4&116&3.00s&0.14s\\
35&2&2&4&136&7.5s&0.36s\\
40&2&2&4&156&13.3s&0.64s\\
6&4&3&17&3240&16s&400s\\
8&4&3&19&45360&35593s&$>$1d\\
7&2&3&12&1728&9.9s&91s\\
8&2&3&13&4032&121s&1169s\\
9&2&3&14&9216&736s&$>$1d\\
\hline
\end{tabular}
\caption{Experiments in \textsc{Magma} measuring the arithmetic complexity ($\mathbb K=\mathsf{GF}(65521)$).}\label{fig:GFexp}
}\end{figure}

\begin{figure}
\centering{ 
\begin{tabular}{|c|c|c|c|c|c|c|}
\hline
$n$&$p$&$D$&$\DEG(\sfI(\F,1))$&$F_5$ time&FGLM time&matrix density\\
\hline
\hline
15&3&2&728&1.38s&0.03s&36.86\%\\
16&3&2&840&2.20s&0.03s&36.91\%\\
17&3&2&960&3.21s&0.13s&36.96\%\\
18&3&2&1088&4.62s&0.12s&37.00\%\\
19&3&2&1224&6.57s&0.07s&37.04\%\\
20&3&2&1368&9.54s&0.10s&37.07\%\\
15&4&2&5824&131.65&10.66s&33.53\%\\
16&4&2&7280&258.6s&29.2s&33.78\%\\
17&4&2&8960&480.9s&68.9s&34.00\%\\
18&4&2&10880&877.36s&123.78s&34.19\%\\
19&4&2&13056&1600.1s&215.1s&34.35\%\\
20&4&2&15504&2727.6s&363.8s&34.49\%\\
21&4&2&18240&10371.7s&590.3s&34.62\%\\
9&1&3&768&0.32s&0.01s&22.45\%\\
10&1&3&1536&1.5s&0.15s&20.84\%\\
11&1&3&3072&8.5s&0.53s&20.59\%\\
12&1&3&6144&19.6s&2.46s&19.32\%\\
13&1&3&12288&276s&104s&19.12\%\\
14&1&3&24576&1759s&587s&18.08\%\\
7&2&3&1728&1.4s&0.14s&20.73\%\\
8&2&3&4032&13s&0.7s&20.26\%\\
9&2&3&9216&105s&37s&19.47\%\\
10&2&3&20736&909s&504s&19.08\%\\
6&3&3&2160&1.82s&0.12s&17.52\%\\
7&3&3&6480&31.3s&3.81s&17.39\%\\
6&4&3&3240&3.66s&0.49s&13.63\%\\
7&4&3&12960&140.2s&93.9s&14.55\%\\
8&4&3&45360&5126.9s&3833.9s&15.15\%\\
5&2&4&1728&0.84s&0.12s&14.46\%\\
6&2&4&6480&23.03s&2.01s&14.11\%\\
7&2&4&23328&634.0s&520.4s&13.64\%\\
8&2&4&81648&21362.6s&19349.4s&13.26\%\\
5&3&4&3456&3.58s&0.32s&11.36\%\\
6&3&4&17280&204.3s&139.7s&11.73\%\\
7&3&4&77760&13856.8s&16003s&11.83\%\\
\hline
\end{tabular}
\caption{Timings using the FGb library and $\mathbb K=\mathsf{GF}(65521)$.}\label{fig:fgb}
}\end{figure}

\textbf{Interpretation of the results.} Notice that the degree of
regularity and the degree match exactly the bounds given in
Theorem \ref{theo:dregaff}. In
Figures \ref{fig:GFexp} and \ref{fig:fgb}, we can see a different
behavior when $D=2$ or $D=3$. In the case $D=2 $, since the complexity
is polynomial in $n$ (Corollary \ref{coro:compln}), the computations
can be performed even when $n$ is large (close to $20$). Moreover, notice that for
$D=2$ or $D=3$, there is a strong correlation between the degree of
the ideal and the timings, showing that, in accordance with Corollary
\ref{coro:uniformbound}, this degree is a good indicator of the
complexity.

Also, in Figure \ref{fig:fgb}, we give the proportion of non-zero
entries in the multiplication matrices. This proportion plays an
important role in the complexity of FGLM, since recent versions of
FGLM take advantage of this sparsity \cite{FauMou11}. We can notice
that the sparsity of the multiplication matrices increases as $D$
grows.

\smallskip

\textbf{Numerical estimates of the complexity.} Corollary \ref{coro:uniformbound} states that the complexity of the grevlex Gr\"obner basis computation is upper bounded by $O\left(\DEG(\sfI(\F,1))^{4.03\omega}\right)$ when $D\geq 3$, $p\geq 2$, $n\geq 2$.
However, the value $4.03$ is not sharp. In Figure \ref{fig:numcompl}, we report numerical values of the ratio $\log\binom{n+\dreg}{n}/\log \left(\DEG(\sfI(\F,1))\right)$ which show the difference between $4.03$ and experimental values.

\begin{figure}
\centering
\begin{tabular}{|c|c|c|c|}
\hline
n&p&D&$\log\binom{n+\dreg}{n}/\log (\DEG)$\\
\hline
\hline
5&4&3&1.53\\
10&4&3&1.36\\
100&4&3&1.73\\
10000&4&3&1.99\\
10000&9999&3&2.28\\
30000&29999&3&2.28\\
1000&500&3&1.32\\
20000&2&3&2.00\\
500&250&1000&1.09\\
500&2&10000&1.11\\
\hline
\end{tabular}

\caption{Numerical values: $\log\binom{n+\dreg}{n}/\log\left(\DEG(\sfI(\F,1))\right)$.}\label{fig:numcompl}
\end{figure}

Notice that all ratios are smaller than $4.03$, as predicted by
Corollary \ref{coro:uniformbound}.  Experimentally, the ratio
decreases and tends towards 1 when $D$ grows, in accordance with the
complexity
formula $$O\left(\DEG\left(\sfI(\F,1)\right)^{\frac{\log(2eD)}{\log(D-1)}\omega}\right)$$
for the grevlex Gr\"obner basis computation. Also, when $D\geq 3$, the
worst ratio seems to be reached when $p=n-1$, $D=3$ and $n$ grows, and
experiments in Figure \ref{fig:numcompl} tend to show that it is
bounded from above by $2.28$.

\bibliographystyle{abbrv}
\bibliography{biblio}
\end{document}